\newcommand{\be}{\begin{equation}}
\newcommand{\ee}{\end{equation}}
\newcommand{\ba}{\begin{array}}
\newcommand{\ea}{\end{array}}
\newcommand{\bea}{\begin{eqnarray}}
\newcommand{\eea}{\end{eqnarray}}
\newcommand{\calC}{{\cal C }}
\newcommand{\calL}{{\cal L }}
\newcommand{\calT}{{\cal T }}
\newcommand{\calS}{{\cal S }}
\newcommand{\calG}{{\cal G }}
\newcommand{\calP}{{\cal P }}
\newcommand{\calJ}{{\cal J }}
\newcommand{\calW}{{\cal W }}
\newcommand{\ZZ}{\mathbb{Z}}
\newcommand{\CC}{\mathbb{C}}
\newcommand{\la}{\langle}
\newcommand{\ra}{\rangle}
\newtheorem{dfn}{Definition}
\newtheorem{lemma}{Lemma}
\newtheorem{prop}{Proposition}
\newtheorem{theorem}{Theorem}
\begin{document}

\title{A no-go theorem for a two-dimensional self-correcting quantum memory based on stabilizer codes}

\author{Sergey Bravyi}\email{sbravyi@us.ibm.com}
\affiliation{\it IBM Watson Research Center}

\author{Barbara M. Terhal}\email{bterhal@gmail.com}
\affiliation{\it IBM Watson Research Center}

\begin{abstract}
We study properties of stabilizer codes that permit a local
description on a regular $D$-dimensional lattice.  Specifically, we
assume that the stabilizer group of a code (the gauge group for
subsystem codes) can be generated by local Pauli operators such that
the support of any generator is bounded by a hypercube of size
$O(1)$. Our first result concerns the optimal scaling of the
distance $d$ with the linear size of the lattice $L$. We prove an
upper bound $d=O(L^{D-1})$ which is tight for $D=1,2$. This bound
applies to both subspace and subsystem stabilizer codes. Secondly,
we analyze the suitability of stabilizer codes for building a
self-correcting quantum memory. Any stabilizer code with
geometrically local generators can be naturally transformed to a
local Hamiltonian penalizing states that violate the stabilizer
condition. A degenerate ground-state of this Hamiltonian corresponds
to the logical subspace of the code. We prove that for $D=1,2$
the height of the energy barrier separating different logical states is
upper bounded by a constant independent of the lattice size $L$.
The same result holds if there are 
unused logical qubits that are treated as ``gauge qubits". It
demonstrates that a self-correcting quantum memory cannot be built
using stabilizer codes in dimensions $D=1,2$. This result is in
sharp contrast with the existence of a classical self-correcting
memory in the form of a two-dimensional ferromagnet.
Our results leave open the possibility for a self-correcting quantum
memory based on 2D subsystem codes or on 3D subspace or subsystem
codes.
\end{abstract}

\maketitle

\section{Introduction}
One of the most intriguing open problems in the field of quantum
information processing is whether one can build a self-correcting
quantum memory --- a macroscopic physical system storing quantum
information for macroscopic periods of time without active error
correction \cite{Bacon:2005}. If it exists, such a hypothetical
device could play the role of a ``quantum hard drive" in a future
quantum computing technology. It is also a question of fundamental
interest how to design a system which allows for quantum coherence to
be preserved at non-zero temperature, similar in some sense to the
quest to understand the origin of high-temperature
superconductivity.


A possible scenario of how quantum self-correction can be achieved
in realistic physical systems was proposed by
Kitaev~\cite{Kitaev:1997}. The main idea of~\cite{Kitaev:1997} was
to mimic self-correction in classical magnetic media storage devices
using quantum spin or electron Hamiltonians with topologically
ordered ground-states. Logical qubits encoded into the ground-state
of such Hamiltonian become virtually isolated from the environment
assuming that the environment can only probe the system locally by
applying small static perturbations to the Hamiltonian. Note that
what is usually called a topological order translates to coding
language as simply the condition for a quantum code to have a
macroscopic distance.
A number of generalizations and proposals how this scenario could be
implemented in the lab have been made
recently~\cite{Freedman:2004,Kitaev:2005,jiang:2008,gladchenko:2008,Freedman:2008}.

As was first explicitly noted in \cite{Dennis:2001}, if the system
interacts with a thermal bath, the presence of a topological order
in the ground-state by itself does not guarantee self-correction.
Indeed, one may expect that local errors created by
thermal fluctuations will tend to accumulate which may result in a
global error having a non-trivial effect on the encoded qubits. This
phenomenon of `thermal fragility' of a topological order was 
studied by Nussinov and Ortiz in~\cite{Nussinov:2008}.
For Kitaev's 2D toric code the loss of topological order at any
non-zero temperature was studied in \cite{Nussinov:2008,castelnovo:2007, AFH:2008, kay:nonreliable}.

The main challenge in constructing a self-correcting memory is to
combine the existence of a coding ground-space with a natural,
passive, energy dissipation mechanism which prevents errors from
accumulating. It was argued by Dennis et al.~\cite{Dennis:2001} that
the 4D version of the toric code introduced by Kitaev
in~\cite{Kitaev:1997} is indeed a self-correcting quantum memory. In
\cite{Bacon:2005} Bacon considered the question of a self-correcting
memory in dimensions less than 4. He introduced a three-dimensional
model, based on a subsystem code which is now called the
three-dimensional Bacon-Shor code, which could be a model of
self-correcting quantum memory.


In order to set the stage for our results on a possible
self-correcting quantum memory, it is instructive to review the
classical state of affairs. A classical 1D Ising ferromagnet has two
degenerate ground-states corresponding to the classical repetition
code. The distance of this code is $n$, the number of spins. A
classical 1D ferromagnet is not a classical self-correcting memory,
since the phase transition to a disordered phase occurs at zero
temperature. The intuitive reason is that the energy cost of a
single domain of flipped spins is independent of the size of the
domain and hence such domain can grow without cost destroying the
order. A 2D classical Ising ferromagnet is a self-correcting memory;
the model exhibits a low-temperature memory phase separated by a
phase transition to a disordered phase \footnote{Note that the Ising
models are in fact not stable against perturbations (stray magnetic
fields etc.) while Kitaev's toric code model is stable against
sufficiently weak perturbations.}.

The 1D and 2D Ising ferromagnets allow for storage of only classical
information, since the distance of the classical repetition code
with respect to phase-flip errors is 1. Guided by this example, one
can take the following conditions as necessary for a {\it classical}
spin system to be a self-correcting memory:
\begin{enumerate}
\item The system has a degenerate ground-state; one has to flip a macroscopic
number of spins in order to map one ground-state to another.
\item A macroscopic energy barrier has to be traversed by any sequence of
single-spin flips mapping one ground-state to another.
\end{enumerate}
Note that if condition~1 is violated, the environment can destroy
the encoded information by acting only on a few spins. If
condition~2 is violated, there is no reason to expect that an energy
dissipation mechanism will prevent single-spin errors from
accumulating into a logical error. For example, the 1D Ising model
satisfies condition~1 but fails to satisfy condition~2. On the other
hand, the 2D Ising model satisfies both conditions since mapping one
ground-state (all spins up) to the other (all spins down) requires
creating a domain wall of macroscopic size.

Let us now consider a system made up from $n$ quantum spins (qubits)
with a degenerate ground-state spanning a linear subspace
$\calL\subseteq (\CC^2)^{\otimes n}$ which will be used to encode
logical qubits~\footnote{The assumption that logical qubits are
encoded into the ground-state of the system may fail for some models
such as topological quantum computation where information is encoded
into a collective state of anyonic excitations rather than the
ground-state.}.
The subspace $\calL$ can be viewed as the logical subspace of a
quantum code.

In the present paper we shall restrict ourselves to  {\it stabilizer
codes} (a.k.a. additive codes). The main idea of stabilizer codes is
to encode $k$ logical qubits into $n$ physical qubits using a
logical subspace $\calL\subseteq (\CC^2)^{\otimes n}$ spanned by
states $|\psi\ra$ that are invariant under the action of a {\it
stabilizer group} $\calS$,
\[
\calL=\{|\psi\ra\in (\CC^2)^{\otimes n}\, : \, P\, |\psi\ra=
|\psi\ra\quad \forall P\in \calS\}.
\]
Here $\calS$ may be an arbitrary Abelian subgroup of the Pauli group
\[
\calP=\la iI, X_1,Z_1,\ldots,X_n,Z_n\ra
\]
such that $-I\notin \calS$.
For any
stabilizer group $\calS$ one can always choose a set of generators
$\calS=\la S_1,\ldots,S_m\ra$ such that $S_a\in \calP$ are
self-adjoint Pauli operators. Thus the logical subspace $\calL$ can
be regarded as the ground-space of a Hamiltonian~\footnote{Our
analysis can be easily generalized to Hamiltonians $H=-\sum_{a=1}^m
r_a S_a$ with arbitrary real coefficients $r_a$.} \be \label{H}
H=-\sum_{a=1}^m S_a. \ee Note that all terms in $H$ pairwise
commute. In addition, the Hamiltonian has a constant gap above the
ground-space. All eigenvalues and eigenvectors of $H$ can be
explicitly computed using
the stabilizer formalism~\cite{Gottesman:PhD}. If all generators are independent, one has
$m=n-k$. In general $m>n-k$ since it may be advantageous to choose
an over-complete set of generators in order to make the energy
barrier higher.

We will assume that the physical qubits live at vertices of a
$D$-dimensional lattice $\Lambda=\{1,\ldots,L\}^{D}$ with open or
periodic boundary conditions. This choice for regular
$D$-dimensional lattice is made for pedagogical reasons, since all
our results can be generalized straightforwardly to an arbitrary
graph embedded into a $D$-dimensional space. Accordingly, there are
$n=L^D$ physical qubits. We would like the quantum memory
Hamiltonian defined in Eq.~(\ref{H}) to be physically realizable,
thus we demand that it involves only short-range interactions. More
precisely, we assume that the support of every generator $S_a$ can
be bounded by a hypercube with $r^D$ vertices for some constant
interaction range $r$.

Examples of such Hamiltonians are the family of surface codes \cite{Bravyi:1998}, color
 codes~\cite{bombin_delgado:2006,Bombin:2007'}, and 3D membrane-net models~\cite{Bombin:2007}.

Recall that the {\em weight} $|P|$ of a Pauli operator  $P=P_1
\ldots P_n \in \calP$ is the number of non-identity single-qubit
Pauli operators $P_i$.
The distance $d$ of a stabilizer code is the minimum weight of an
undetectable Pauli error, i.e., an operator $P\in \calP$ preserving
the logical subspace $\calL$ and inducing a non-trivial
transformation on $\calL$, 
\[
d=\min_{P\in \calC(\calS)\backslash \calS} |P|.
\]
Here $\calC(\calS)$ is the centralizer of the subgroup $\calS$,
i.e., a group of Pauli operators commuting with every element of $\calS$.
The notation ${\calC}(\calS)\backslash \calS$
means a subset of elements in ${\calC}(\calS)$ which are not in ${\calS}$.



Let the system described by Hamiltonian Eq. (\ref{H}) interact with
a thermal bath.
In order to decide whether the system is a good quantum memory, one
in principle would have to choose a dynamical model describing the
interaction with a thermal bath, then choose a particular decoding
algorithm and calculate how the probability of failure at the
decoding step scales with the system size. For realistic dynamical
models such as Davies' weak coupling
limit~\cite{Davies:1974,Alicki:2007}, pursuing this approach seems
to be completely hopeless.
However, it is not unreasonable to assume that the bath dephases the
system on short time-scales $T_2 \lll T_1$ \footnote{Such dephasing
in the energy eigenbasis can be explicitly enforced by measuring the
stabilizer generators $S_a$ using an ancilla qubit. We do not need
to learn the outcome of this measurement, since the role of this
measurement is only to project onto a certain eigenvalue of the
stabilizer}. Hence, after a short dephasing time, the state of the
system is a mixture of energy eigenstates and the effect of the bath
is a process of energy exchange with the system. The state of the
system can thus be represented as a probabilistic mixture of energy
eigenstates $E\, |\psi\ra$, where $|\psi\ra\in \calL$ is the encoded
logical state and $E\in \calP$  is a Pauli error.
An error $E\in \calP$ destroys encoded information if it is
undetectable by the code $\calL$, i.e., if
the restriction of $E$ onto $\calL$ induces a non-trivial transformation
on $\calL$.
Under these assumptions we can now state
 the quantum analogue of conditions~1,2 which are necessary for a
quantum spin system to be a self-correcting quantum memory:
\begin{enumerate}
\item[1q.] The ground-space of a (Hamiltonian) system is the
logical subspace of a quantum error correcting code with macroscopic
distance.
\item[2q.] A macroscopic energy barrier has to be traversed by any sequence of
single-qubit Pauli errors resulting in an undetectable error.
\end{enumerate}

The question that we address in this paper is: what quantum error
correcting codes are compatible with conditions 1q and 2q?
Note again that our conditions are necessary conditions for a
self-correcting quantum memory; we do not claim that they are also
sufficient conditions. In particular, whether a model is a
self-correcting quantum memory may depend on details of the scaling
of entropy versus energy at non-zero temperature.




\subsection{Bounds on The Distance}
Our first result concerns the optimal scaling of the distance $d$
with the linear size of the lattice $L$. It allows one to check
whether a code is compatible with condition~1q. In
Section~\ref{sec:stabilizer} we prove the following theorem

\begin{theorem}
\label{thm:1} Let $\calS=\la S_1,\ldots,S_m\ra$ be a stabilizer code
on a $D$-dimensional lattice
$\Lambda=\{1,\ldots,L\}^D$. Suppose the support of any generator $S_a$ can be
bounded by a hypercube with $r^D$ vertices.
 Then the distance of $\calS$ satisfies
\be \label{bound1} d\le r L^{D-1}. \ee
This bound holds for both periodic and open boundary conditions.
\end{theorem}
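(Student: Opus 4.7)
The plan is to exhibit a non-trivial element of $\calC(\calS)\setminus \calS$ whose support lies in a slab of thickness $r$ along one coordinate axis; such an operator automatically has weight at most $rL^{D-1}$. Two locality lemmas drive the argument. First, a splitting lemma: if $\Lambda = B \sqcup A \sqcup C$ with $A$ a slab of thickness $r$ separating $B$ from $C$, then no generator $S_a$ (whose support fits in a hypercube of side $r$) can have support meeting both $B$ and $C$, so for any $P \in \calC(\calS)$ supported on $B\cup C$, writing $P = P_B \cdot P_C$ with $P_B, P_C$ supported on $B$ and $C$ respectively, both factors commute with every generator by disjoint-support considerations, and hence $P_B, P_C \in \calC(\calS)$. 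Second, a union lemma: if regions $R_1,\ldots,R_K$ are pairwise separated (in the first coordinate, say) by at least $r-1$ and each $R_j$ is \emph{correctable} in the sense that every element of $\calC(\calS)$ supported on $R_j$ already lies in $\calS$, then $\bigcup_j R_j$ is also correctable; the proof mirrors the splitting lemma, noting that the hypercube assumption forces generator diameter at most $r-1$ in each coordinate, so no generator can simultaneously touch two different $R_j$'s.

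I would then argue by contradiction. Suppose $d > rL^{D-1}$, so every subset of at most $rL^{D-1}$ physical qubits is correctable; in particular, every slab of thickness $\leq r$ is correctable. Partition the first coordinate axis into alternating blocks $A_1, G_1, A_2, G_2, \ldots$ with slab widths $|A_i| = r$ and gap widths $|G_i| = r-1$ (for periodic boundary conditions this wraps around cleanly; for open boundary conditions leftover strips are absorbed into adjacent blocks, which does not affect correctability). The $A_i$'s are pairwise separated by the $G_i$'s of width $r-1$, so the union lemma gives that $U_A = \bigcup_i A_i$ is correctable; similarly the $G_i$'s are separated by the $A_i$'s of width $r \geq r-1$, so $U_G = \bigcup_i G_i$ is correctable. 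By the standard cleaning lemma for stabilizer codes, correctability of $U_A$ implies that every logical operator is equivalent, modulo $\calS$, to a Pauli operator supported on the complement of $U_A$, which is exactly $U_G$; correctability of $U_G$ then forces any such representative to lie in $\calS$. Hence every element of $\calC(\calS)$ lies in $\calS$, i.e., $k=0$, contradicting the existence of a logical qubit.

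The main obstacle is the union lemma, where one must carefully use the hypercube-of-side-$r$ assumption to derive the $r-1$ separation threshold that prevents a single generator from touching two different $R_j$'s. Everything else is organized bookkeeping: the cleaning lemma itself is a standard $\mathbb{F}_2$-symplectic dimension count showing that $\calC(\calS)$ equals $\calS$ plus the centralizer elements supported on $R^c$ whenever $R$ is correctable, and the boundary adjustments for open versus periodic conditions are a cosmetic rearrangement of the alternating partition.
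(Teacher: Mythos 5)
Your proposal is essentially the paper's own proof: the ``union lemma'' for regions separated by gaps of width $r-1$, combined with cleaning off the union of the $r$-slabs, is exactly the paper's decomposition of $\Lambda$ into alternating even/odd strips of widths $r$ and $r-1$, with the even strips cleaned via the Cleaning Lemma and the odd strips analyzed strip-by-strip (the paper packages this as the bound $d_1(\calS)\le r$ on a ``linear distance,'' which it then reuses for Theorem~\ref{thm:2}). The one step that would fail as written is the remark that ``leftover strips are absorbed into adjacent blocks'': widening a block beyond $r$ columns gives it more than $rL^{D-1}$ qubits, so its correctability no longer follows from the contradiction hypothesis $d>rL^{D-1}$; the correct fix, as in the paper, is to write $L=a(r-1)+br$ so that \emph{every} block has width $r-1$ or $r$ (possible once $L\ge 2(r-1)^2$, with $a+b$ even in the periodic case), keeping each block correctable while still separating its neighbors by at least $r-1$.
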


This theorem implies that any stabilizer code on a 1D lattice fails to satisfy
condition~1q. It contrasts the fact that for classical 1D codes the
distance can be of order $L$ (consider as an example the 1D
repetition code). For a 2D lattice the bound Eq.~(\ref{bound1})
allows $d$ to grow linearly with $L$ which is compatible with
condition~1q. Surface codes~\cite{Bravyi:1998} provide an example of
2D codes for which the bound Eq.~(\ref{bound1}) is saturated.

We are not aware of any codes saturating the bound
Eq.~(\ref{bound1}) for $D\ge 3$. Note that a generalized toric code
 on a $D$-dimensional lattice~\footnote{A generalized toric code on a $D$-dimensional
lattice has qubits living on hypercubes of dimension $D'\equiv
\lfloor D/2\rfloor$, $X$-type stabilizers living on hypercubes of
dimension $D'+1$, and $Z$-type stabilizers living on hypercubes of
dimension $D'-1$. Logical $X$-type operators correspond to closed
hypersurfaces of dimension $D'$ and logical $Z$-type operators
correspond to closed hypersurfaces of dimension $D-D'$ on the dual
lattice.} has distance $d\sim L^{\lfloor D/2\rfloor}$.
We conjecture that in $D=3$ the distance of any stabilizer
code with local generators satisfies $d=O(L)$.

If a stabilizer code has more than one logical qubit, some of the logical qubits may be
protected from the environment better than the others. Note
that Theorem~\ref{thm:1} bounds the minimum weight of logical operators for the {\em worst}
choice of a logical qubit. One can ask whether the same bound applies to the {\em best} logical qubit
as well. To state this question more formally  assume that we encode $k$ qubits
using a stabilizer code $\calS$ with $g+k$ logical qubits for some $g>0$. We
will regard the extra $g$ logical qubits with the corresponding
logical Pauli operators
$\overline{X}_1,\overline{Z}_1,\ldots,\overline{X}_g,\overline{Z}_g$
as unused ``gauge qubits"~\cite{Poulin:2005} such that any error
affecting only the gauge qubits can be ignored. 
At the same time we can use the logical Pauli operators on the gauge qubits to minimize
the weight of `useful' logical operators.
Thus, in this case the relevant distance would be the minimum weight
of a Pauli operator $P$ preserving the logical subspace, i.e., $P\in
\calC(\calS)$, that cannot be generated by stabilizers and logical
Pauli operators on
the gauge qubits, that is,
 \be \label{dgauge} d({\calG})=\min_{P\in
\calC(\calS)\backslash \calG} \; |P|, \quad \calG=\la
\calS,\overline{X}_1,\overline{Z}_1,\ldots,\overline{X}_g,\overline{Z}_g\ra.
\ee
 In Section~\ref{subs:subsystem_aux} we will use the formalism of
subsystem codes to prove the following extension of Theorem~\ref{thm:1}.

\vspace{3mm}

\noindent {\bf Theorem~1${}^*$.} {\it Under the assumptions of
Theorem~\ref{thm:1} the distance $d(\calG)$ satisfies the bound
Eq.~(\ref{bound1}) for any choice of logical operators
$\overline{X}_1,\overline{Z}_1,\ldots,\overline{X}_g,\overline{Z}_g$
on the gauge qubits.}

\vspace{3mm}

Let us remark that an analogous bound does not apply to `bare' logical operators, i.e.,
$P\in \calC(\calG)\backslash \calG$. The minimal weight of `bare' logical operators
can be of order $L^D$, see \footnote{If we would only
consider the minimum weight of elements of ${\calC}(\cal
G)\backslash {\cal G}$, that is the weight of `bare' logical
operators on the logical qubits, one can easily construct 1D
counterexamples to Theorem~1$^*$. Take (odd) $n$ copies of a small
code, such as the Steane code [[7,1,3]] and define two high-weight
logical operators as ${\bf X}=\overline{X}_1 \overline{X}_2 \ldots
\overline{X}_n$ and ${\bf Z}=\overline{Z}_1 \overline{Z}_2 \ldots
\overline{Z}_n$. Here $\overline{X}_i,\overline{Z}_i$ are the
logical operators for the $i$th Steane code. It is clear that even
when we multiply these operators ${\bf X}$ and ${\bf Z}$ with
elements in $\calS$, their minimum weight will scale with $n$. Aside
from this logical qubit there are also gauge qubits. If we multiply
${\bf X},{\bf Z}$ with the logical operators of these gauge qubits,
we can reduce the weight of ${\bf X}$ or ${\bf Z}$ to $O(1)$.}.  

\subsection{Bounds on The Energy Barrier}
\label{subs:into_bounds}
Our second result concerns the scaling of the energy barrier with
lattice size $L$. In contrast to the distance $d$ the energy barrier
is not a property of the stabilizer group $\calS$ only --- it
depends on the choice of generators $S_a$ used to define the
Hamiltonian Eq.~(\ref{H}). As was mentioned above, the set of
generators can be vastly overcomplete (consider as an example the 4D
toric code). In order to exclude overcomplete generating sets in
which some generator $S_a$ appears a macroscopic number of times we
shall impose a (natural) restriction that any qubit can participate
only in a constant number of generators $S_a$.

Note that the Hamiltonian Eq.~(\ref{H}) has a ground-state energy
$-m$. Let $|\psi\ra\in \calL$ be any ground-state and $E\in \calP$
be any Pauli operator. Since $E$ either commutes or anti-commutes
with every term in the Hamiltonian, a state $E\, |\psi\ra$ has
energy $\la \psi|E^\dag H E|\psi\ra =-m + \epsilon(E)$, where
$\epsilon(E)$ is proportional to the number of
Pauli operators $S_a$ in the Hamiltonian $H$
which anticommute with $E$,
\be
\epsilon(E)=2\, \#\{ a\, : \, S_a E=-E S_a\}.
\label{eq:ecost}
\ee
 We shall refer to $\epsilon(E)$ as the {\it
energy cost} of a Pauli operator $E$.

Let us say that a sequence $P_0,P_1,\ldots,P_t\in \calP$ is a walk
on the Pauli group starting at $P_0$ and arriving at $P_t$ iff for
all $i$ the operators $P_i$ and $P_{i+1}$ differ by at most one
qubit. Let $\calW(S,T)$ be a set of all walks starting at $S$ and
arriving at $T$. For any walk $\gamma\in \calW(S,T)$ let
$\epsilon_{max}(\gamma)$ be the maximum energy reached by $\gamma$
\[
\epsilon_{max}(\gamma)=\max_{P\in \gamma} \epsilon(P).
\]
Suppose the environment tries to implement a Pauli error $E\in
\calP$ by a sequence of single-qubit Pauli errors. The minimum
amount of energy the environment has to inject into the system in
order to implement $E$ is
\[
d^\ddag(E)=\min_{\gamma\in \calW(I,E)} \epsilon_{max}(\gamma).
\]
Thus the energy barrier the environment has to overcome in order to
implement a non-trivial logical operator on the encoded qubits is
 \be
\label{d*} d^\ddag = \min_{E\in \calC(\calS)\backslash \calS} \; \;
\min_{\gamma\in \calW(I,E)} \epsilon_{max}(\gamma). \ee Let us
emphasize once more that in contrast to the distance $d$ the energy
barrier $d^\ddag$ is not uniquely determined by the stabilizer group $\calS$ but
is a function of the generating set $S_1,\ldots,S_m$ used to construct the Hamiltonian
Eq.~(\ref{H}).


Note that our assumption that the environment implements a logical
operator by {\em single-qubit} Pauli errors is not a restriction if
we are interested in determining the scaling of the energy barrier
$d^{\ddag}$ with $L$. To see this, imagine that the environment
instead makes Pauli errors with weight at most $w=O(1)$ at the time.
One can always simulate a single step of such a walk by $w$ steps
with single-qubit Pauli errors. It can increase the maximum energy
reached by any walk by at most a constant and hence
the energy barrier $d^\ddag$ can change at most by a constant.

Using the assumption that any qubit participates in $O(1)$
generators $S_a$ one can easily prove a naive upper bound \be
\label{naive} d^\ddag=O(d). \ee Therefore, if we want $d^\ddag$ to
grow with $L$ we must look for codes with the distance $d$ growing
with $L$. For 2D codes one may have $d\sim L$ and thus the naive
upper bound Eq.~(\ref{naive}) cannot rule out a possibility that
$d^\ddag$ grows with $L$.
In Section~\ref{sec:stabilizer} we prove
the following theorem:
\begin{theorem}
\label{thm:2} Let $\calS=\la S_1,\ldots,S_m\ra$ be a stabilizer code
with local generators on a 2D lattice such that each qubit
participates in a constant number of generators.  Then the energy
barrier $d^\ddag$ is upper bounded by a constant independent of the
lattice size $L$.
\end{theorem}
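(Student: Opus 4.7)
The plan is to construct, for some non-trivial logical operator $P\in \calC(\calS)\setminus \calS$, an explicit walk $\gamma\in \calW(I,P)$ whose maximum energy $\epsilon_{max}(\gamma)$ is bounded by a constant independent of $L$. The idea is to implement $P$ one single-qubit factor at a time along a support with the geometric shape of a thin strip, so that at every intermediate step only a bounded number of stabilizer generators straddle the frontier between the applied and the unapplied qubits.

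The first step is to strengthen Theorem~\ref{thm:1} for $D=2$ into the following geometric statement: there exists a non-trivial logical operator $P\in \calC(\calS)\setminus \calS$ whose support is contained in a rectangular strip $B\subseteq \Lambda$ of width $r$ and length $L$. This should follow from the same dimension-counting/cleaning argument that yields the weight bound $d\le rL$: one compares the ranks of $\calS$ and $\calC(\calS)$ restricted to a thin strip and extracts an element of $\calC(\calS)\setminus \calS$ supported entirely in it. I will take this strip-supported logical operator as input from the proof of Theorem~\ref{thm:1}.

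Given such a $P$, I would enumerate the qubits $q_1,q_2,\ldots,q_N$ in its support sorted by their coordinate along the long axis of $B$, with ties broken arbitrarily. Write $P=P_{q_1}P_{q_2}\cdots P_{q_N}$, set $P^{(k)}=P_{q_1}\cdots P_{q_k}$, and consider the walk $\gamma=(I,P^{(1)},\ldots,P^{(N)}=P)$. Consecutive elements differ by exactly one qubit, so $\gamma\in \calW(I,P)$. To bound $\epsilon(P^{(k)})$ uniformly in $k$ and $L$, set $R^{(k)}=P_{q_{k+1}}\cdots P_{q_N}$, so that $P=P^{(k)}R^{(k)}$. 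Since $P$ commutes with every stabilizer $S_a$, the operator $S_a$ anticommutes with $P^{(k)}$ if and only if it anticommutes with $R^{(k)}$. If the support of $S_a$ is disjoint from $\{q_1,\ldots,q_k\}$ it commutes with $P^{(k)}$ directly; if disjoint from $\{q_{k+1},\ldots,q_N\}$ it commutes with $R^{(k)}$ and therefore with $P^{(k)}$. Hence only generators whose support intersects both sides of the cut after $q_k$ can contribute to $\epsilon(P^{(k)})$. Any such generator lies in a hypercube of side $r$ straddling the cut, which fits into an $r\times O(r)$ window of $B$ containing $O(r^2)=O(1)$ qubits. The bounded-participation hypothesis then caps the number of such generators by a constant, so $\epsilon(P^{(k)})=O(1)$. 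Taking the maximum over $k$ gives $\epsilon_{max}(\gamma)=O(1)$, and hence $d^\ddag=O(1)$.

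The main obstacle is the initial geometric step: Theorem~\ref{thm:1} as stated only bounds the weight of a logical operator, whereas the walk argument genuinely needs the support to lie inside a strip of bounded width. Extracting this stronger geometric version from the proof of Theorem~\ref{thm:1} (rather than merely invoking its weight bound) is the essential input; without the thin-strip property the cut between applied and unapplied qubits would not have bounded boundary and the energy along the walk could grow with $L$. Once a strip-supported logical operator is in hand, the remainder of the argument uses only locality and bounded participation and is essentially forced.
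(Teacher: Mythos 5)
Your proposal is correct and follows essentially the same route as the paper: the paper's Proposition~\ref{prop:ldistance} establishes exactly the geometric input you flag as essential (a linear distance $d_1(\calS)\le r$, i.e.\ a non-trivial logical operator supported in a width-$r$ strip, obtained by cleaning out alternating strips via Lemma~\ref{lemma:1}), and Theorem~\ref{thm:2} is then deduced by implementing that operator row-by-row with only the generators straddling the frontier contributing to the energy. Your explicit factorization $P=P^{(k)}R^{(k)}$ to show that only straddling generators can anticommute with a partial implementation is a slightly more careful rendering of the paper's ``energy cost comes only from the two end-points of the string'' remark.
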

This theorem tells us that 2D stabilizer codes cannot be compatible
with condition~2q and thus the corresponding Hamiltonians defined in
Eq.~(\ref{H}) cannot be a self-correcting quantum memory. This
result contrasts the fact that 2D classical codes with local
generators (e.g. the 2D Ising model) can be used to build a
self-correcting classical memory. We remark that Theorem~\ref{thm:2}
applies to both open and periodic boundary conditions. It can also
be easily generalized to quasi-2D lattices with the number of 2D
layers bounded by a constant independent of $L$.

Theorem~\ref{thm:2} can be strengthened for stabilizer codes which
encode more than one logical qubit. Specifically, we can encode $k$
qubits using a stabilizer code with $g+k$ logical qubits by
regarding the extra $g$ logical qubits  as ``gauge qubits", see the
discussion after Theorem~\ref{thm:1}. Accordingly, any error
affecting only the gauge qubits can be ignored. In this case the
energy barrier that the environment has to overcome in order to
implement a non-trivial logical operator would be \be
\label{dstargauge} d^\ddag(\calG)= \min_{E\in \calC(\calS)\backslash
\calG} \; \; \min_{\gamma\in \calW(I,E)} \epsilon_{max}(\gamma), \ee
where $\calG=\la \calS,
\overline{X}_1,\overline{Z}_1,\ldots,\overline{X}_g,\overline{Z}_g\ra$
is a group generated by $\calS$ and the logical Pauli operators on
the gauge qubits. In Section~\ref{subs:subsystem_aux} we will use
the formalism of subsystem codes to prove the following extension of
Theorem~\ref{thm:2}.

\vspace{3mm}

\noindent {\bf Theorem~2${}^*$.} {\it Under the assumptions of
Theorem~\ref{thm:2} the energy barrier $d^\ddag(\calG)$ is upper
bounded by a constant independent of the lattice size $L$ for any
choice of the gauge Pauli operators
$\overline{X}_1,\overline{Z}_1,\ldots,\overline{X}_g,\overline{Z}_g$.}

\vspace{3mm}

It has been observed by many authors~\cite{Bacon:2005,Alicki:2007}
that the 2D toric code does not feature a macroscopic energy barrier
since the logical operators for this code have a string-like
geometry. The energy cost of a partially implemented logical
operator comes only from the two end-points of a string which cannot
be larger than some small constant. One can ask whether logical
operators for general 2D stabilizer codes can always be chosen as
``strings" (in which case the logical operators can be interpreted
as moving point-like anyons around the lattice).
The proof of
Theorem~\ref{thm:2} presented in  Section~\ref{sec:stabilizer} provides
a partial answer to this question. It implies that for any 2D
stabilizer code with generators of size $r\times r$ there exists at least one logical operator
whose support can be covered by  a
rectangle of size $r\times L$. Such a rectangle can be regarded as a quasi-1D string.

\subsection{Beyond Subspace Stabilizer Codes}

Hamiltonians of the form in Eq. (\ref{H}) may seem like overly
restrictive models to consider for a quantum memory, in particular
since they involve a set of {\em commuting} operators. Common models
in many-body physics involve geometrically-local sets of
non-commuting operators, such as the Heisenberg model and its
variants, Kitaev's honeycomb model \cite{Kitaev:2005}, the quantum
compass model, see e.g. \cite{qcompass}, etc.
Such Hamiltonians can always be written as \be \label{Hgauge}
H=\sum_{a=1}^m r_a\, G_a, \ee where $r_a$ are real coefficients and
$G_a$ are elements of the Pauli group $\calP$ with local support.
The formalism of {\em stabilizer subsystem codes}~\cite{Poulin:2005}
provides a systematic way of constructing Hamiltonians of the form
Eq.~(\ref{Hgauge}) with a degenerate ground-state that can be taken
as possible models of a quantum memory.
This formalism focuses on those
symmetries of the Hamiltonian that can be described using the Pauli group ignoring
all other non-Pauli symmetries such as the $SU(2)$-symmetry (since they cannot be 
analyzed using the framework of stabilizer codes).

By analogy with stabilizer codes which can be fully characterized by
the stabilizer group, a subsystem code can be characterized by its
{\it gauge group}. In our case the gauge group $\calG$ is the group
generated by the Pauli operators $G_a$, see Eq.~(\ref{Hgauge}).
Conversely, any (non-abelian) subgroup $\calG\subseteq \calP$ with
local generators $G_1,\ldots,G_m$ yields a family of Hamiltonians
Eq.~(\ref{Hgauge}).
The symmetries of the Hamiltonian $H$ defined in Eq.~(\ref{Hgauge})
are described by the centralizer of $\calG$, i.e. those elements in
$\calP$ which commute with every element in $\calG$.  It
can be always represented as
\[
\calC(\calG)=\la
\calS,\overline{X}_j,\overline{Z}_j,j=1,\ldots,k\ra,
\]
where $\calS$ is the {\em center} of the gauge group, i.e.
$\calS= \calG \cap \calC(\calG)$, and
$\overline{X}_j,\overline{Z}_j$ are {\it logical Pauli operators} of
the quantum code. Since the symmetries of $H$ include the logical
Pauli group on the $k$ logical qubits, any eigenvalue of $H$ has
degeneracy at least $2^k$. In particular, the ground-state of $H$
can be used to encode $k$ logical qubits.

The group $\calS$ is called a {\it stabilizer group} of the
subsystem code. It induces a decomposition of the Hilbert space of
$n$ qubits into $\calG$-invariant sectors, $(\CC^2)^{\otimes
n}=\bigoplus_{\bf s} \calL_{\bf s}$, where ${\bf s}$ labels
different syndromes, i.e., irreps of the stabilizer group. Given any
syndrome ${\bf s}$, the corresponding sector $\calL\equiv \calL_{\bf
s}$ defines a {\it logical subspace} of the subsystem
code~\footnote{It is well known that the codes corresponding to
different choices of the syndrome ${\bf s}$ are equivalent up to a
local change of basis in every qubit.}. The logical subspace
possesses a subsystem structure, $\calL=\calL_{logical}\otimes
\calL_{gauge}$, where $\calL_{logical}$ defines the {\it logical
subsystem} and $\calL_{gauge}$ defines the {\it gauge subsystem}.
The gauge group $\calG$ restricted to $\calL$ can be identified with
the group of Pauli operators on the gauge subsystem $\calL_{gauge}$.
The group generated by the logical operators
$\overline{X}_j,\overline{Z}_j$ restricted to $\calL$ can be
identified with the group of Pauli operators on the logical
subsystem $\calL_{logical}$. Accordingly,
$\dim{(\calL_{logical})}=2^k$.

It is important to note that the stabilizer group of a subsystem
code with local generators does not necessarily have local
generators. This fact may make such subsystem codes potentially more
powerful than stabilizer codes. Subsystem codes with an Abelian
gauge group have no gauge qubits, i.e., $\calL=\calL_{logical}$
hence they coincide with the (subspace) stabilizer codes discussed
above. Note that for (subspace) stabilizer codes the stabilizer
group $\calS$ coincides with the gauge group $\calG$.

The distance $d$ of a subsystem code is defined as the minimum
weight of a Pauli error $P\in \calP$ preserving $\calL$
and inducing a non-trivial transformation on $\calL_{logical}$,
 \be
 \label{subsystemd} d=
 \min_{P\in \calC(\calS)\backslash \calG}  \,
|P|, \quad \calS=\calG\cap \calC(\calG).
 \ee
Let us mention a useful identity
\be
\label{useful_identity}
\calC(\calS)=\calG\cdot \calC(\calG)
\ee
that holds for any subsystem code $\calG$. 
Note that a subsystem code with a large distance must have a
non-trivial stabilizer group. Indeed, if $\calS=\la I\ra$, one has
two possibilities: (i) any single-qubit Pauli operator belongs to
$\calG$, and (ii) some single-qubit Pauli operator is a non-trivial
logical operator. In case~(i) one has $\calG=\calP$, i.e., the code
has no logical qubits. In case~(ii) one has $d=1$.

The best known example of a locally generated subsystem code is the
Bacon-Shor code~\cite{Bacon:2005, AC:bs} which can be defined on 2D
or 3D lattice.
 The Hamiltonian $H$ associated with the 2D Bacon-Shor code corresponds
to the quantum compass model in condensed-matter physics,
\[
H_{BS}=-\sum_{1\le i,j\le L} J_x X_{i,j} X_{i+1,j} + J_z  Z_{i,j} Z_{i,j+1}.
\]
Here each pair $i,j$ represents a vertex of the lattice and
$J_x,J_z$ are arbitrary real coefficients. Accordingly, the gauge
group $\calG$ of the 2D Bacon-Shor code has generators  $X_{i,j}
X_{i+1,j}$ and $Z_{i,j} Z_{i,j+1}$. This code has distance $L$. The
stabilizer group $\calS$ is generated by operators $S^x_i
=\prod_{j=1}^L X_{i,j} X_{i+1,j}$ and $S^z_j=\prod_{i=1}^L Z_{i,j}
Z_{i,j+1}$. One can easily check that all elements of $\calS$ have
weight at least $2L$, that is, $\calS$ cannot have local generators.
The code has one logical qubit with logical Pauli operators
$\overline{X}=\prod_{j=1}^L X_{1,j}$ and $\overline{Z}=\prod_{i=1}^L
Z_{i,1}$.

The example of the Bacon-Shor code demonstrates that some subsystem codes with local
generators originate from subspace codes with highly non-local stabilizer groups.
It might suggest that subsystem codes can beat the upper bound Eq.~(\ref{bound1}).
We will show that this intuition is wrong by proving
\begin{theorem}
\label{thm:3}
Let $\calG=\la G_1,\ldots,G_m\ra$ be the gauge group of
a subsystem stabilizer code on a $D$-dimensional lattice
$\Lambda=\{1,\ldots,L\}^D$. Suppose the support of any
generator $G_a$ can be bounded by a hypercube with $r^D$ vertices.
Then the distance of $\calG$ satisfies 
 \be \label{bound1sub} d\le 3r L^{D-1}. \ee
\end{theorem}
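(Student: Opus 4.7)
The plan is to mimic the slab/cleaning strategy underlying Theorem~\ref{thm:1}, compensating for the non-abelian gauge group by enlarging the target slab from thickness $r$ to $3r$, so that the ``middle'' slab has an $r$-thick buffer on either side.

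First I would establish a \emph{subsystem cleaning lemma}: if a region $R\subseteq\Lambda$ is \emph{correctable}, meaning every element of $\calC(\calS)$ supported on $R$ already belongs to $\calG$, then any $P\in\calC(\calS)$ can be multiplied by a gauge element $g\in\calG\cap\calP_R$ so that $gP$ is supported on $\Lambda\setminus R$.  This is the natural subsystem analogue of the standard cleaning lemma; the proof proceeds through the symplectic pairing on $\calP_R/\langle\pm iI\rangle$ together with the identity $\calC(\calS)=\calG\cdot\calC(\calG)$ from Eq.~(\ref{useful_identity}), which lets one identify the image of $\calC(\calS)$ inside $\calP_R$ with the image of $\calG$ there.

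Next I would slice $\Lambda$ perpendicular to the $e_1$-axis into $N\approx L/r$ slabs $T_0,\ldots,T_{N-1}$ of thickness $r$ (each containing $rL^{D-1}$ qubits) and define fat slabs $F_i:=T_{i-1}\cup T_i\cup T_{i+1}$ of thickness $3r$.  The crucial locality observation is that any generator $G_a$ whose support meets the middle slab $T_i$ lies entirely inside $F_i$, since $G_a$ fits in an $r$-hypercube.  Consequently the gauge multiplier produced by the cleaning lemma applied to $R=T_i$ can always be chosen inside $\calG\cap\calP_{F_i}$.  I would then argue by contradiction: suppose that no fat slab $F_i$ supports a non-trivial element of $\calC(\calS)\setminus\calG$, i.e., every middle slab $T_i$ is correctable.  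Pick any $P\in\calC(\calS)\setminus\calG$ (non-empty, since the theorem is vacuous when $k=0$) and sweep through the middle indices in three sub-sweeps ordered by residue class modulo $3$; within each sub-sweep the fat slabs $F_i$ are pairwise disjoint, so the localized cleanings act on disjoint qubits and commute.  A combined bookkeeping of the three sub-sweeps reduces $P$ to an element of $\calG$, contradicting $P\notin\calG$.  Hence some $F_i$ must support a non-trivial $P\in\calC(\calS)\setminus\calG$, of weight at most $|F_i|=3rL^{D-1}$.

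The principal obstacle is the cross-round interference in the final step.  Within one sub-sweep the fat slabs are well separated, but across sub-sweeps they overlap: cleaning $T_i$ in round~$2$ employs a multiplier supported in $F_i=T_{i-1}\cup T_i\cup T_{i+1}$, which reaches into a slab $T_{i-1}$ that may have been cleared in round~$1$.  Controlling this requires tracking, after each round, a restricted gauge group in which the residual support of $P$ must lie, and using the correctability of the \emph{next} round's $F_i$'s to absorb the re-introductions.  The factor $3$ in $3rL^{D-1}$ is precisely the minimal buffer that makes this bookkeeping close; tightening it to $2r$ or $r$ would require either abelianness of $\calG$ (the subspace case of Theorem~\ref{thm:1}) or additional global assumptions on the code.
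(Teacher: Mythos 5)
Your localized cleaning lemma is itself sound: for a correctable region $R$ (no element of $\calC(\calS)\setminus\calG$ supported on $R$), the identity $\calC(\calS)=\calG\cdot\calC(\calG)$ of Eq.~(\ref{useful_identity}) together with the duality argument of Lemma~\ref{lemma:2} shows that the restriction of any $P\in\calC(\calS)$ to $R$ extends to a gauge element, and since $\calG$ (unlike $\calS$) has local generators, that extension can be chosen supported in the $r$-neighborhood of $R$. The genuine gap is the global sweep that is supposed to convert correctability of every slab into the contradiction $P\in\calG$. You identify the cross-round interference yourself and then assert, with no mechanism, that ``bookkeeping'' closes it. It does not: cleaning $T_i$ in round two multiplies $P$ by a gauge element reaching into $T_{i\pm 1}$, which re-populates slabs cleaned in round one; iterating has no monotonically decreasing quantity and no termination guarantee. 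This is not a technicality --- the paper explicitly flags, in the remark following Lemma~\ref{lemma:2}, that the impossibility of cleaning several regions simultaneously is precisely the new difficulty posed by subsystem codes. Nor does the natural fallback succeed: one \emph{can} simultaneously clean a $3$-separated family of slabs (their $r$-neighborhoods are disjoint), leaving $P'=gP\in\calC(\calS)\setminus\calG$ supported on blocks of width at most $2r$ separated by width-$r$ gaps; but one cannot then conclude that some single block carries a non-trivial logical operator, because the condition $P'\in\calC(\calS)$ does not localize block-by-block when $\calS$ lacks local generators. That localization step is exactly what makes the subspace proof of Proposition~\ref{prop:ldistance} close, and exactly what breaks here.

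The paper closes the argument by an entirely different device. It proves a Restriction Lemma (Lemma~\ref{lemma:3}): if $M$ is a slab such that the restricted code $\calG_M$ has a logical qubit, then $\calG_M$ has distance at least $d-|\partial M|\ge d-2rL^{D-1}$. Taking $M$ to be a \emph{minimal-width} such slab, it shows the distance of $\calG_M$ is at most $rL^{D-1}$: otherwise a buffer $B$ of width $r-1$ inside $M$ can be cleaned out of an \emph{anti-commuting pair} $P,Q$ of logical operators of $\calG_M$, and since no gauge generator straddles $B$, one of the two sides of $B$ inherits an anti-commuting pair of operators in $\calC(\calG_A)\setminus\calG_A$, i.e.\ a logical qubit of a strictly narrower slab, contradicting minimality. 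Combining the two bounds gives $d\le 3rL^{D-1}$. The anti-commuting-pair certificate of non-triviality is what replaces your global sweep; to salvage your approach you would need an analogous local witness, since ``cannot be cleaned away everywhere'' is not one.
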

Note that the distance of a subsystem code depends only on the gauge group $\calG$.
Therefore, condition~1q is true or false for any Hamiltonian in the family Eq.~(\ref{Hgauge})
regardless of the choice of coefficients $r_a$.

As far as condition~2q is concerned, the first natural question is
how to define the energy barrier $d^\ddag$ that must be traversed in
order to implement a logical operator.  In
Section~\ref{subs:subsystem_energy_barrier} we propose a definition
of $d^\ddag$  that depends on the energy spectrum of the Hamiltonian
in different sectors $\calL_{\bf s}$. We also derive a usable upper
bound on the energy barrier that does not require computing the
energy spectrum. For any particular Hamiltonian this upper bound can
be calculated using only the stabilizer formalism.  Unfortunately,
we have not yet been able to prove an analogue of
Theorem~\ref{thm:2} for subsystem codes. In
Section~\ref{subs:subsystem_energy_barrier} we prove a very weak
upper bound on $d^\ddag$, namely, $d^\ddag=O(d)$. This leaves as an
open question whether Hamiltonian models based on 2D subsystem codes
may have a macroscopic energy barrier.

\subsection{Organization of The Paper}

The proof of Theorems~\ref{thm:1}-\ref{thm:3} relies on three technical results, Lemma's \ref{lemma:1}-\ref{lemma:3}.
The first Lemma \ref{lemma:1} which we call a ``cleaning lemma"
asserts that for any logical operator $P\in \calC(\calS)$ and for
any subset of qubits $M$, $|M|<d$, one can choose
a stabilizer $S\in \calS$ such that $PS$  acts trivially on $M$.
The stabilizer $S$ involves only those generators $S_a$ 
whose support overlaps with $M$.
Lemma \ref{lemma:2} proves an analogue of this cleaning lemma for
subsystem codes. The third Lemma \ref{lemma:3} which we call a
``restriction lemma" relates the distance of a code $\calS$ defined
on the entire lattice $\Lambda$ and the distance of a code $\calS_M$
obtained by restricting generators of $\calS$ onto some subset of
qubits $M\subseteq \Lambda$. The restriction lemma applies to both
subspace and subsystem stabilizer codes.

Stabilizer codes with local generators are discussed in
Section~\ref{sec:stabilizer} which proves Theorems~\ref{thm:1} and
\ref{thm:2}. Subsystem codes with local generators are discussed in
Section~\ref{sec:subsystem}. We prove the upper bound on the
distance of subsystem codes (Theorem~\ref{thm:3}) in
Section~\ref{subs:subsystem_distance}. Upper bounds on the energy
barrier of subsystem codes are discussed in
Section~\ref{subs:subsystem_energy_barrier}. Finally,
Section~\ref{subs:subsystem_aux}  shows how to prove
Theorems~1${}^*$ and 2${}^*$. Some open problems are discussed in
Section~\ref{sec:open}.


\section{Stabilizer Codes}
\label{sec:stabilizer}

Recall that $\calP$ denotes the Pauli group on $n$ qubits.
We assume that the qubits live at vertices of a $D$-dimensional lattice
$\Lambda=\{1,\ldots,L\}^D$.
For any subgroup $\calS\subseteq \calP$ and any subset $M\subseteq \Lambda$
introduce a group
\[
\calS(M)=\{ P\in \calS\, : \, \mbox{Supp}(P)\subseteq M\}
\]
which includes all elements of $\calS$ whose support is contained in
$M$. In particular, $\calP(M)$ is a group of all Pauli operators
whose support is contained in $M$. Introduce also a group
\[
\calS_M=\{ P\in \calP(M)\, : \, PQ\in \calS \quad \mbox{for
some $Q\in \calP(\Lambda \backslash M)$}\}
\]
which includes all Pauli operators $P\in \calP(M)$  that can be
extended to some element of $\calS$.
In other words  $\calS_M$ is a group obtained by 
restricting elements in $\calS$ to $M$. 
Note that if some element $P\in \calS$ crosses the boundary of $M$ then
the restriction of $P$ onto $M$ is no longer element of $\calS$.
By definition $\calS(M)\subseteq \calS_M\subseteq
\calP(M)\subseteq \calP$.



Our main technical tool will be the following ``cleaning" lemma. It
allows one to clean out any region $M\subset \Lambda$ of size
smaller than the distance such that no logical operator of the code
contains $X$, $Y$, or $Z$ on qubits of $M$. More formally, one can
multiply any logical operator $P\in \calC(\calS)$ by a stabilizer
$S\in \calS$  such that $PS$ acts trivially on $M$. The stabilizer
$S$ uses only those generators $S_a$ whose support overlaps with
$M$, see Figure~\ref{fig:cleaning1}. The cleaning lemma is
particularly useful when the generators of $\calS$ are local. In
this case the cleaning changes $P$ only inside $M$ and in a small
neighborhood of the boundary of $M$. Thus the cleaning of $P$ can be
done multiple times, so that multiple `holes' can be made into the
support of $P$.
\begin{figure}
\centerline{\includegraphics[height=2.8cm]{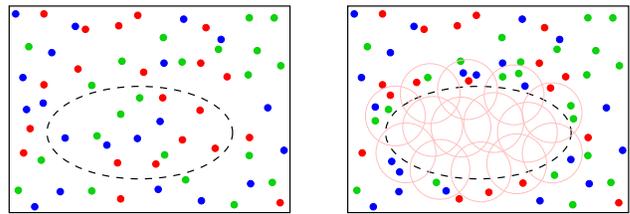}}
\caption{Support of a logical operator $P\in \calC(\calS)$ is shown
by the colored dots (representing $X$, $Y$, and $Z$ operators). In
order to clean out the region $M$ (interior of the dashed ellipse)
one can multiply $P$ with a stabilizer  $S\in \calS$. The stabilizer
$S$ includes only those generators $S_a$  whose support overlaps
with $M$. It yields an equivalent logical operator $PS$ acting
trivially on $M$. The cleaning can performed simultaneously for
different logical operators and for different spacial regions. }
\label{fig:cleaning1}
\end{figure}

\begin{lemma}[Cleaning Lemma]
\label{lemma:1}
Let $\calS=\la S_1,\ldots,S_m\ra$ be a stabilizer code and
$M\subseteq \Lambda$ be an arbitrary subset of qubits.
Denote $\calJ(M)$ a set of indexes $a$ such that the support of $S_a$
overlaps with $M$. Then one of the following is true:\\
(1) There exists a non-trivial logical operator $P\in \calC(\calS)\backslash \calS$
whose support is contained in $M$,\\
(2) For any logical operator $P\in \calC(\calS)$ one can choose a stabilizer
\[
S=\prod_{a\in \calJ(M)} S_a^{x_a}, \quad x_a\in \{0,1\}
\]
such that $PS$ acts trivially on qubits of $M$.
\end{lemma}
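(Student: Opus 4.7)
The plan is to recast the statement in the standard symplectic-vector-space picture of the Pauli group and reduce it to a short perp calculation in the restricted space.

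First I would identify $\calP$ modulo phases with $V=\mathbb{F}_2^{2n}$ equipped with the symplectic form $\omega$ (commutation equals $\omega$-orthogonality), identify $\calS$ and $\calC(\calS)$ with their images in $V$, and for any $M\subseteq \Lambda$ let $V_M\subseteq V$ be the subspace corresponding to operators in $\calP(M)$. With this identification $\calS(M)=\calS\cap V_M$ while $\calS_M=\calS|_M$ is the image of $\calS$ under the projection $V\to V_M$. The form $\omega$ remains non-degenerate on $V_M$, and because any $Q\in V_M$ vanishes outside $M$ one has $\omega(Q,P)=\omega(Q,P|_M)$ for every $P\in V$.

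The technical core is to derive two dual identities inside $V_M$:
\[
(\calS_M)^{\perp_{V_M}} = \calC(\calS)\cap V_M,
\qquad
(\calC(\calS)_M)^{\perp_{V_M}} = \calS\cap V_M.
\]
Both follow directly from the displayed observation together with the ambient double-centralizer relation $\calC(\calC(\calS))=\calS$, which itself is just non-degeneracy of $\omega$ on $V$.

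With these in hand the lemma splits cleanly. Alternative (1) is by definition the statement that $\calC(\calS)\cap V_M \not\subseteq \calS$. Suppose (1) fails; combined with the trivial inclusion $\calS\subseteq \calC(\calS)$ this gives $\calC(\calS)\cap V_M=\calS\cap V_M$. Substituting into the two identities above yields $(\calC(\calS)_M)^{\perp_{V_M}}=(\calS_M)^{\perp_{V_M}}$, and applying $\perp_{V_M}$ once more (an involution on subspaces since $\omega$ is non-degenerate on $V_M$) gives $\calC(\calS)_M=\calS_M$. In particular $P|_M\in \calS_M$ for every $P\in \calC(\calS)$.

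To finish, I would observe that any generator $S_a$ with $a\notin \calJ(M)$ has support disjoint from $M$ and contributes nothing to $V_M$, so $\calS_M$ is already generated by $\{S_a|_M : a\in \calJ(M)\}$ with $\mathbb{F}_2$ coefficients. Writing $P|_M=\sum_{a\in \calJ(M)} x_a\, S_a|_M$ and lifting to $S=\prod_{a\in \calJ(M)} S_a^{x_a}$ gives a stabilizer of the required form with $PS$ trivial on $M$ (up to an overall $\pm 1$, which can be absorbed since $-I\notin \calS$). The only non-routine ingredient is the symplectic duality pair; the rest is bookkeeping, and I expect the dimension/perp manipulation in $V_M$ to be the step that carries all the content.
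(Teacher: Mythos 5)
Your proposal is correct and is essentially the paper's own proof translated into the binary symplectic ($\mathbb{F}_2^{2n}$) language: your duality identities and the double-perp step in $V_M$ are exactly the paper's inclusion $\la iI\ra\cdot\calS(M)\subseteq\calC(\calS_M)\cap\calP(M)$, the strict-versus-equality dichotomy, and the double-centralizer identity $\calC(\calC(\calT))=\la iI\ra\cdot\calT$. The concluding step of discarding generators with $a\notin\calJ(M)$ also matches the paper.
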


\begin{proof}
Let $\la iI\ra\subseteq \calP$ be a subgroup of phase factors.
By definition of the subgroups $\calS_M$ and $\calS(M)$ one has the following inclusion:
\be
\label{inclusion1}
\la iI\ra \cdot \calS(M) \subseteq \calC(\calS_M)\cap \calP(M).
\ee
If the inclusion Eq.~(\ref{inclusion1}) is strict, there exists a Pauli operator $P\in \calP(M)$
such that $P\in \calC(\calS_M)$ but $P\notin \calS$, that is, $P\in \calC(\calS)\backslash \calS$.
It corresponds to the case~(1) in the statement of the lemma.
Otherwise, Eq.~(\ref{inclusion1})  is an equality, that is,
$\calC(\calS_M)\cap \calP(M)=\la iI\ra \cdot \calS(M)$.
 Taking the centralizer of both sides
one gets
\be
\label{inclusion2}
\calS_M=\calC(\calS(M))\cap \calP(M).
\ee
(Note that for any subgroup $\calT$ of the Pauli group one has 
$\calC(\calC(\calT))=\la iI\ra \cdot \calT$.)
Let $P\in \calC(\calS)$  be any logical operator and
$P_M$ be a restriction of $P$ onto $M$.
By definition, $P_M\in \calC(\calS(M))\cap \calP(M)$.
It follows from Eq.~(\ref{inclusion2}) that $P_M\in \calS_M$, that is,
one can find a stabilizer $S\in \calS$ such that $PS$ acts trivially on $M$.
Let us expand the stabilizer $S$ in term of the generators and take out all
generators whose support does not overlap with $M$. It yields a new stabilizer
$S'\in \calS$ such that $PS'$ still acts trivially on $M$.
On the other hand, $S'$ is a product of generators $S_a$, $a\in \calJ(M)$, so
we arrive at the case~(2) in the statement of the lemma.
\end{proof}

The proof of both Theorems~\ref{thm:1} and \ref{thm:2} goes by
applying the cleaning lemma to get an upper bound on a {\it linear
distance} of a code that we define below. Let us first define a
linear distance of a code when the lattice
$\Lambda=\{1,\ldots,L\}^D$ has open boundary conditions.
\begin{dfn}
\label{dfn:ldistance}
Given a Pauli operator $P\in \calP$ define its {\it linear dimension} $d_1(P)$
as the minimum length of a contiguous interval $R\subseteq \{1,\ldots,L\}$
such that $R\times \{1,\ldots,L\}^{D-1}$ covers the support of $P$.
Given a stabilizer code $\calS$ define a linear distance of $\calS$ as
\be
\label{d_1}
d_1(\calS)=\min_{P\in \calC(\calS)\backslash \calS} d_1(P).
\ee
\end{dfn}
Thus a linear distance of a code characterizes the minimal linear
dimension of non-trivial logical operators along some fixed
coordinate axis. Above we have chosen the first coordinate axis but
one could similarly define a linear distance for any other
coordinate axis. For periodic boundary conditions we define a linear
dimension $d_1(P)$ as the minimum length of a contiguous  interval
$R\subseteq \ZZ_L$ such that $R\times (\ZZ_L)^{D-1}$ covers the
support of $P$. Then a linear distance of a code is defined by the
same formula Eq.~(\ref{d_1}).

\begin{prop}
\label{prop:ldistance}
Let $\calS=\la S_1,\ldots,S_m\ra$ be a stabilizer code on a lattice
$\Lambda=\{1,\ldots,L\}^D$
 with open or periodic boundary conditions.
 Suppose the support of any generator $S_a$ can be covered by a hypercube with $r^D$ vertices. Then
\be
\label{ldistance_upper}
d_1(\calS)\le r
\ee
for any $L\ge 2(r-1)^2$.
\end{prop}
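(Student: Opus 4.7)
The plan is to argue by contradiction using only the Cleaning Lemma (Lemma~\ref{lemma:1}). Assume $d_1(\calS) > r$. Then for every slab $R = [j, j+r-1] \times \{1,\ldots,L\}^{D-1}$ of width $r$ along the first coordinate axis, case~(1) of Lemma~\ref{lemma:1} fails: no nontrivial logical operator has support contained in $R$. By case~(2), for every $P \in \calC(\calS)$ there exists a stabilizer $S$---a product of those generators $S_a$ whose supports overlap $R$---such that $PS$ acts trivially on $R$. Since each generator fits in an $r^D$-hypercube, the support of $S$ lies inside the fattened slab $R^+$ obtained by extending $R$ by $r-1$ columns on each side along the first coordinate.

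Fix a nontrivial $P \in \calC(\calS) \setminus \calS$. I plan to apply the Cleaning Lemma successively along a chain of $r-1$ width-$r$ slabs $R_1,\ldots,R_{r-1}$ placed consecutively with start coordinates $1,\, 2r-1,\, 4r-3,\,\ldots,\, 1+(r-2)(2r-1)$. Consecutive slabs are separated by narrow gaps $G_1,\ldots,G_{r-2}$, each of width $r-1$. The essential geometric point is that for each $i \ge 2$ the fattened slab $R_i^+$ is contained in $G_{i-1} \cup R_i \cup G_i$ and is therefore disjoint from $R_1 \cup \cdots \cup R_{i-1}$. This ensures that the cleaning stabilizers $S_1,\ldots,S_{r-1}$ produced by Lemma~\ref{lemma:1} can be applied one after the other without undoing one another, producing a representative $P' := P \cdot S_1 \cdots S_{r-1}$ of the same nontrivial logical class as $P$ that acts trivially on $R_1 \cup \cdots \cup R_{r-1}$.

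The bound $L \ge 2(r-1)^2$ is exactly what is needed so that this chain of slabs plus gaps fits inside $[1,L]$, with total width
\[
(r-1)\cdot r \;+\; (r-2)\cdot(r-1) \;=\; 2(r-1)^2.
\]
After the cleanings, the support of $P'$ lies entirely in $G_1 \cup \cdots \cup G_{r-2}$, a disjoint union of width-$(r-1)$ strips. If $P'$ has nontrivial restriction to only a single gap, then $d_1(P') \le r-1 < r$, immediately contradicting $d_1(\calS) > r$. In the remaining case---where several of the restrictions $P'|_{G_i}$ are nontrivial---one exploits the fact that every generator straddling a gap $G_i$ is supported inside $R_i \cup G_i \cup R_{i+1}$, so its restriction to $G_i$ must commute with $P'|_{G_i}$. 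Combining this local commutation structure with further applications of Lemma~\ref{lemma:1} to each gap reduces $P'$ to an element of $\calS$, again a contradiction.

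The main obstacle will be this last reduction. Applying the Cleaning Lemma directly to a single gap $G_i$ is not enough, because the fattened gap $G_i^+$ extends back into the previously-cleaned slabs $R_i$ and $R_{i+1}$ and can in principle re-introduce Pauli content there. The argument must therefore be organized around the sharp separation of the gaps by the cleaned slabs and the strict locality of the generators. Periodic boundary conditions are handled by the same counting, with the first axis regarded as a cycle of length $L$.
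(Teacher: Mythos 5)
Your overall strategy (alternate cleanable wide strips with narrow gaps, clean the wide strips using Lemma~\ref{lemma:1}, then localize a logical operator in a single narrow gap) is the same as the paper's, but your specific construction has a genuine gap. You place a \emph{fixed} chain of $r-1$ slabs and $r-2$ gaps of total width exactly $2(r-1)^2$ starting at column $1$, and then claim that after cleaning the slabs the support of $P'$ lies entirely in $G_1\cup\cdots\cup G_{r-2}$. This is false whenever $L>2(r-1)^2$: the columns from $2(r-1)^2+1$ to $L$ are never cleaned, so $P'$ may retain arbitrary support there, and no contradiction with $d_1(\calS)>r$ follows (the piece of $P'$ living in the uncovered region can have linear dimension up to $L-2(r-1)^2$). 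The paper avoids this by tiling the \emph{entire} lattice: it writes $L=a(r-1)+br$ with $a,b\ge 0$ and $a+b$ even (which is where the hypothesis $L\ge 2(r-1)^2$ is actually used, as a Frobenius-type representability condition), partitions $\Lambda$ into $K=a+b$ strips of widths $r$ or $r-1$, and cleans all the even-indexed strips. Your numerology $(r-1)r+(r-2)(r-1)=2(r-1)^2$ is a coincidence, not the role the bound plays; there is also no reason for the number of slabs to be $r-1$ rather than growing with $L$.

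The endgame is also not correctly executed. Once $P'$ is supported on a disjoint union of narrow strips each separated by cleaned strips of width $\ge r-1$, no further cleaning is needed and your worry about fattened gaps re-contaminating the cleaned slabs is a red herring. The correct finish is purely a commutation argument: since every generator $S_a$ has width at most $r$, it overlaps at most one of the remaining narrow strips, so $P'\in\calC(\calS)$ forces each restriction $P'|_{G_i}$ to lie in $\calC(\calS)$ individually; and since $P'$ is the product of these restrictions and $P'\notin\calS$, at least one restriction lies in $\calC(\calS)\backslash\calS$ and has linear dimension at most $r-1<r$, contradicting $d_1(\calS)>r$. Your proposal gestures at the relevant locality ("its restriction to $G_i$ must commute with $P'|_{G_i}$") but then defers to unspecified "further applications of Lemma~\ref{lemma:1}", which is exactly the step that does not need, and should not use, further cleaning. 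If you replace your fixed-width chain by a strip decomposition covering all of $\Lambda$ and substitute this restriction argument for the final step, you recover the paper's proof.
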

\begin{proof}
We shall prove the proposition for $D=2$. The generalization to
other dimensions is straightforward. Let us assume that the bound
Eq.~(\ref{ldistance_upper}) is not satisfied, that is, \be
\label{contrary} d_1(\calS)>r \ee and show that it leads to a
contradiction.The idea behind the proof is depicted in
Fig.~\ref{fig:cleaning2}.

One can easily check that any integer $L\ge 2(r-1)^2$ can be represented as
$L=a(r-1)+br$ for some integers $a,b\ge 0$ such that $K\equiv a+b$ is even.
Therefore we can  represent the lattice $\Lambda$
as a disjoint union of vertical strips,  $\Lambda=A_1\cup \ldots \cup A_K$,
such that any strip has width (along the $x$-axis) $r$ or $r-1$,
and the total number of strips $K$ is even, see Fig.~\ref{fig:cleaning2}.
The assumption Eq.~(\ref{contrary}) implies that
any non-trivial logical operator $P \in \calC(\calS)\backslash \calS$
is supported on at least two strips.
Therefore,  Lemma~\ref{lemma:1} implies that we
can multiply $P$ with some product of generators $S_a$
to clean out any chosen even strip $A_2,A_4,\ldots,A_{K}$.
 However, a support of any generator $S_a$ has width at most $r$ and thus it
overlaps with at most one even strip, so we can apply  Lemma~\ref{lemma:1} repeatedly
to clean out all even strips. In other words,
 we can define a new logical operator
\[
P'=P \, S, \quad \mbox{for some} \quad S=\prod_{p=2,4,\ldots,K} \; \prod_{a\in \calJ(A_{p})} S_a^{x_a}
\]
such that $P'$ is supported only on odd strips, that is
\[
P'=P_1' P_3' \cdots  P_{K-1}', \quad P_j'\in \calP(A_j).
\]
Next we note that each generator $S_a$ overlaps with at most one odd
strip. Therefore the inclusion $P'\in \calC(\calS)$ is possible only
if $P_j'\in \calC(\calS)$ for all $j=1,3,\ldots,K-1$. Since
$P'\notin \calS$ there must exist at least one odd strip $A_{j}$
such that $P_j'\notin \calS$, that is, $P_j'$ is a non-trivial
logical operator. However it contradicts Eq.~(\ref{contrary}) since
the linear dimension $d_1(P_j)$ is at most $r$.
\end{proof}

\begin{figure}
\centerline{\includegraphics[height=3.4cm]{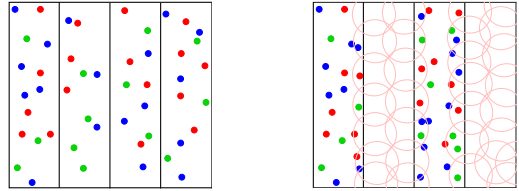}}
\caption{{\it Theorem~\ref{thm:1}: sketch of the proof.} Choose any
logical operator $P\in \calC(\calS)\backslash \calS$. Assuming
$d_1(\calS) > r$, we can clean out any vertical strip of width $r$.
Repeating this procedure on parallel disjoint strips (even strips),
we obtain an equivalent logical operator $P'$ which has support on
the remaining set of disjoint strips (odd strips). If the strips are
wide enough, then $P'$ commutes with $\calS$ on every odd strip
individually. It must be that the restriction of $P'$ on some odd
strip is not in $\calS$, otherwise $P'\in \calS$. Thus we obtain a
non-trivial logical operator whose support is contained in a single
vertical strip which contradicts the assumption that
$d_1(\calS)>r$.} \label{fig:cleaning2}
\end{figure}

Theorems~\ref{thm:1} and \ref{thm:2} are direct corollaries of
Proposition~\ref{prop:ldistance}. Indeed, choose any logical
operator $P\in \calC(\calS)\backslash \calS$ with a linear dimension
$d_1(P)\le r$. Then the weight of $P$ is at most $rL^{D-1}$, that
is, $d\le rL^{D-1}$ which proves Theorem~\ref{thm:1}. Furthermore,
if $D=2$ then the bound $d_1(P)\le r$ implies that the support of
$P$ can be covered by a quasi-1D vertical string. One can construct
a walk $\gamma\in \calW(I,P)$ that implements $P$ in a row-by-row
fashion. At any intermediate step the contribution to the energy
cost of a partially implemented $P$ comes only from the two
end-points of the string. The assumption that any qubit participates
in $O(1)$ generators $S_a$ implies that
$\epsilon_{max}(\gamma)=O(1)$. It proves Theorem~\ref{thm:2}.

{\it Remark:} After having tried to construct 3D stabilizer codes
with $d=O(L^2)$ which would saturate the bound in
Eq.~(\ref{bound1}), we conjecture that any 3D stabilizer code with
local generators satisfies $d=O(L)$. In fact, all known examples of
stabilizer codes correspond to a bound $d=O(L^{\lfloor D/2
\rfloor})$. The pattern that seems to emerge is that there is a
trade-off between the minimum weight of, say, the logical
$\overline{X}$ versus logical $\overline{Z}$ operator. The 1D
repetition code shows that one can have a logical $\overline{X}$ of
weight $O(L)$, but then the logical $\overline{Z}$ is of weight
$O(1)$,  so the distance of the code is $O(1)$. In 2D, one can make
both operators of weight $O(L)$ as the surface codes demonstrate. In
3D, the 3D toric  code is an example where again one logical
operator is of weight $O(L^2)$ but the other logical operator is of
weight $O(L)$. In 4D, both operators are of weight $O(L^2)$.


\section{Subsystem Codes}
\label{sec:subsystem}

\subsection{Bounds on The Code Distance}
\label{subs:subsystem_distance} Let us start by generalizing the
cleaning lemma to subsystem codes.
\begin{lemma}[Cleaning Lemma for Subsystem Codes]
\label{lemma:2} Let $\calG$ be a gauge group of a subsystem code
and $M$ be an arbitrary subset of qubits.
Then one of the following is true:\\
(1) There exists a non-trivial logical operator $P\in \calC(\calS)\backslash \calG$
whose support is contained in $M$,\\
(2) For any logical operator $P\in \calC(\calG)$ one can choose
a stabilizer $S\in \calS$ such that $PS$ acts trivially on  $M$.
\end{lemma}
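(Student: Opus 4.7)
My plan is to mirror the centralizer argument that proved Lemma~\ref{lemma:1}, but with the roles of the relevant subgroups adjusted for the subsystem setting. Concretely, where Lemma~\ref{lemma:1} used the inclusion $\la iI\ra\cdot\calS(M)\subseteq \calC(\calS_M)\cap\calP(M)$, I will analyze
\[
\la iI\ra\cdot\calG(M)\;\subseteq\; \calC(\calS)\cap\calP(M),
\]
which is immediate: $\calG\subseteq\calC(\calS)$ because $\calS$ is by definition the center of $\calG$, and $\calG(M)\subseteq\calP(M)$ by definition.

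If this inclusion is strict, I pick any $Q\in(\calC(\calS)\cap\calP(M))\setminus\la iI\ra\cdot\calG(M)$. Since $Q\in\calP(M)$ and $\calG\cap\calP(M)=\calG(M)$, the condition $Q\notin\la iI\ra\cdot\calG(M)$ forces $Q\notin\calG$. Thus $Q\in\calC(\calS)\setminus\calG$ is supported in $M$, which is conclusion~(1).

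If instead equality holds, I proceed exactly as in Lemma~\ref{lemma:1}. First, observe that $\calC(\calS)\cap\calP(M)=\calC(\calS_M)\cap\calP(M)$: for any $P\in\calP(M)$ and $S=S_M S_{M^c}\in\calS$, the $M^c$-part commutes trivially with $P$, so $[P,S]=[P,S_M]$ and $P$ commutes with $\calS$ iff it commutes with every restriction in $\calS_M$. The equality then reads $\la iI\ra\cdot\calG(M)=\calC(\calS_M)\cap\calP(M)$, and taking the centralizer of both sides within $\calP(M)$ (using the double-centralizer identity) yields
\[
\calC(\calG(M))\cap\calP(M)=\la iI\ra\cdot \calS_M.
\]
Now for any $P\in\calC(\calG)$, the restriction $P_M$ commutes with every element of $\calG(M)$: if $G\in\calG(M)\subseteq\calP(M)$, then $[P,G]=[P_M,G]$, and $P$ commutes with $\calG\supseteq\calG(M)$. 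Hence $P_M\in\calC(\calG(M))\cap\calP(M)=\la iI\ra\cdot\calS_M$, so there exists $S\in\calS$ whose $M$-restriction coincides with $P_M$ up to phase, and $PS$ acts trivially on $M$, giving conclusion~(2).

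The main conceptual step, and the place where I expect one could get stuck, is identifying the correct inclusion. Although Lemma~\ref{lemma:2} still cleans by elements of the stabilizer $\calS$ (not by general gauge operators), the obstruction that decides whether cleaning is possible now naturally lives in the typically larger group $\calG(M)$ rather than in $\calS(M)$: a bare logical operator is only guaranteed to commute with gauge operators supported on $M$, so it is $\calG(M)$ that must be centralized. Once this replacement is made on the left-hand side of the inclusion used in Lemma~\ref{lemma:1}, the duality/centralizer computation is identical, and the dichotomy of Lemma~\ref{lemma:2} falls out of the usual dichotomy between strict inclusion and equality.
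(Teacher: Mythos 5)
Your proof is correct and follows essentially the same route as the paper's: the same inclusion $\la iI\ra\cdot\calG(M)\subseteq\calC(\calS_M)\cap\calP(M)$ (you merely write the right-hand side as $\calC(\calS)\cap\calP(M)$ and then note the two agree on $\calP(M)$), the same dichotomy between strict inclusion and equality, and the same double-centralizer step yielding $\calC(\calG(M))\cap\calP(M)=\la iI\ra\cdot\calS_M$. No substantive differences.
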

\begin{proof}
Let $\la iI\ra\subseteq \calP$ be a subgroup of phase factors.
By definition of the subgroups $\calS_M$ and $\calG(M)$ one has the
following inclusion:
\be
\label{inclusion3}
\la iI\ra \cdot \calG(M)\subseteq
 \calC(\calS_M)\cap \calP(M).
\ee If the inclusion Eq.~(\ref{inclusion3}) is strict, there exists
a Pauli operator $P\in \calP(M)$ such that $P\in \calC(\calS_M)$ but
$P\notin \calG$, that is, $P\in \calC(\calS)\backslash \calG$. It
corresponds to case~(1) in the statement of the lemma. Otherwise
Eq.~(\ref{inclusion3}) is an equality, that is, $\la iI\ra \cdot
\calG(M)= \calC(\calS_M)\cap \calP(M)$. Taking the centralizer of
both sides one gets \be \label{inclusion4}
 \calS_M=\calC(\calG(M))\cap \calP(M).
\ee
Let $P\in \calC(\calG)$  be any logical operator and
$P_M$ be a restriction of $P$ onto $M$.
By definition, $P_M\in \calC(\calG(M))\cap \calP(M)$. It follows from Eq.~(\ref{inclusion4})
that $P_M\in \calS_M$, that is,  one can find a stabilizer $S\in \calS$ such that $PS$ acts trivially on $M$.
\end{proof}
Note that cleaning out the region $M$ may require changing the operator $P$ globally
(not only in a small neighborhood of $M$) since the stabilizer group $\calS$ of a
subsystem code may lack a local generating set.  This is the reason why for subsystem codes
we cannot clean out multiple disconnected regions simultaneously: cleaning out one region by multiplying
with elements in $\calS$ may generate support in another `far-away' region which we already cleaned out.

On the other hand, if the stabilizer group $\calS=\calG \cap
\calC(\calG)$ has local generators, $\calS=\la S_1,\ldots,S_m\ra$,
the operator $S$ in Lemma~\ref{lemma:2} can be chosen such that it
includes only those generators $S_a$ whose support overlaps with
$M$. In this case the cleaning can be performed ``locally".

For subsystem codes we shall need a new lemma which relates the
distance of a code $\calG$ defined on the entire lattice $\Lambda$
to the distance of a code $\calG_M$ obtained by restricting $\calG$
onto some subset of qubits $M\subseteq \Lambda$
(here we use notations introduced in Section~\ref{sec:stabilizer}).
Recall that $l_\infty$-distance between vertices $u,v\in \Lambda$
is defined as $\max_{j=1}^D |u_j-v_j|$.
\begin{lemma}[Restriction Lemma]
\label{lemma:3} Let $\calG=\la G_1,\ldots,G_m\ra$ be a gauge group
of a subsystem code on a $D$-dimensional lattice $\Lambda$. 
Suppose the support of any
generator $G_a$ can be bounded by a hypercube with $r^D$ vertices.
Let $M\subseteq \Lambda$ be an arbitrary subset and $\partial M$ be
a subset of vertices $u\in \Lambda\backslash M$ such that
$l_\infty$-distance between $u$ and $M$ is at most $r$. Consider the
subsystem code with gauge group $\calG_M$.
Then one of the following is true:\\
(1) The code $\calG_M$ has no logical qubits,\\
(2) The code $\calG_M$ has distance at least $d-|\partial M|$\\
where $d$ is the distance of $\calG$ defined in Eq.~(\ref{subsystemd}).
\end{lemma}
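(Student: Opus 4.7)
The plan is to prove the contrapositive of case~(2): assuming that the restricted code $\calG_M$ has at least one logical qubit, I pick any logical operator $Q$ of $\calG_M$ (so $Q$ is supported on $M$ and lies in $\calC(\calG_M\cap\calC(\calG_M))\setminus\calG_M$) and produce a boundary dressing $R\in\calP(\partial M)$ such that $QR$ is a nontrivial logical operator of the original code $\calG$. Since $|R|\le|\partial M|$ is automatic for any Pauli supported on $\partial M$, this yields $d\le|Q|+|\partial M|$, which after minimizing over $Q$ is exactly the desired distance bound.

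To construct $R$, I would use commutation bookkeeping against the stabilizer $\calS=\calG\cap\calC(\calG)$ of the original code. For $s\in\calS$, decompose $s=s_M\cdot s_{\partial M}\cdot s_{\mathrm{ext}}$ into its pieces on $M$, $\partial M$, and $\Lambda\setminus(M\cup\partial M)$. Because $Q$ sits on $M$ and $R$ will sit on $\partial M$, the requirement $[QR,s]=0$ collapses to $[Q,s_M]=[R,s_{\partial M}]$. Define a character $\chi:\calS\to\ZZ_2$ by $\chi(s)=[Q,s_M]$. The task is then to find $R\in\calP(\partial M)$ whose commutator with $s_{\partial M}$ reproduces $\chi(s)$ for every $s\in\calS$.

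The crux of the argument is showing that $\chi$ descends to a character on the $\partial M$-projection of $\calS$, i.e.\ $\chi(s)=0$ whenever $s_{\partial M}=I$. Here the locality of $\calG$ together with the buffer width $r$ of $\partial M$ become essential: every local generator $G_a$ whose support touches $M$ is forced into $M\cup\partial M$, so it commutes with $s_{\mathrm{ext}}$; the remaining generators are entirely inside $M$ or entirely outside $M\cup\partial M$ and never couple $s_M$ to $s_{\mathrm{ext}}$ in the commutator $[s,G_a]$. Under $s_{\partial M}=I$, the identity $[s,G_a]=0$ therefore reduces to $[s_M,(G_a)_M]=0$ for each generator touching $M$. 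As the projections $(G_a)_M$ generate $\calG_M$, this forces $s_M\in\calG_M\cap\calC(\calG_M)$. But $Q$ is chosen to commute with $\calG_M\cap\calC(\calG_M)$, so $\chi(s)=0$.

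Once $\chi$ factors through the $\partial M$-projection, a standard symplectic argument in $\calP(\partial M)$ supplies an $R\in\calP(\partial M)$ with $[R,s_{\partial M}]=\chi(s)$ for all $s\in\calS$. By construction $QR\in\calC(\calS)$; if $QR$ were in $\calG$, then restricting to $M$ would put $Q$ in $\calG_M$, contradicting the choice of $Q$. Hence $QR\in\calC(\calS)\setminus\calG$ is a bona fide logical operator of $\calG$ with $|QR|\le|Q|+|\partial M|$, giving $d\le|Q|+|\partial M|$ and closing the argument. I expect the main obstacle to be the factoring claim: one must carefully exploit that the buffer width is at least the generator range $r$ to classify each local generator as ``fully inside $M$'', ``straddling $M$ and $\partial M$'', or ``fully exterior'' and to verify that only the middle class imposes nontrivial constraints on $s_M$.
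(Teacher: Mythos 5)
Your proposal is correct, but it takes a genuinely different route from the paper. The paper's proof exploits the identity $\calC(\calS)=\calG\cdot\calC(\calG)$ applied to the restricted code: a minimum-weight logical operator of $\calG_M$ is written as $PG$ with a \emph{bare} part $P\in\calC(\calG_M)\setminus\calG_M$ and a gauge part $G\in\calG_M$. The bare part needs no boundary correction at all --- since $P$ is supported on $M$ and commutes with $\calG_M$, it automatically commutes with every element of $\calG$ --- and only the gauge part is extended to some $G'\in\calG$ supported in $M\cup\partial M$, simply by keeping those generators $G_a$ that touch $M$. You instead work with the dressed operator $Q\in\calC(\calG_M\cap\calC(\calG_M))\setminus\calG_M$ directly and manufacture the boundary correction $R$ by a character argument: the commutation defect $\chi(s)=[Q,s_M]$ descends to a character of the $\partial M$-restriction of $\calS$ (your factoring step, which is where locality and the width-$r$ buffer enter, and which is sound: generators touching $M$ are confined to $M\cup\partial M$, so $s_{\partial M}=I$ forces $s_M\in\calG_M\cap\calC(\calG_M)$, with which $Q$ commutes), and non-degeneracy of the symplectic form on $\calP(\partial M)$ then supplies $R$. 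Your approach buys independence from the identity $\calC(\calS)=\calG\cdot\calC(\calG)$ and makes explicit exactly which commutation constraints the boundary must absorb; the paper's is shorter and avoids the symplectic existence step entirely, since the only object needing extension ($G$) already lies in $\calG_M$ by construction. One small imprecision: generators not touching $M$ need not be ``entirely inside $M$ or entirely outside $M\cup\partial M$'' --- they may straddle $\partial M$ and the exterior --- but this is harmless since such generators restrict trivially to $M$ and impose no constraint on $s_M$.
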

{\it Remark:} By abuse of notation we regard the code $\calG_M$ as a
code defined on qubits of $M$ only (otherwise any Pauli operator on
a qubit $u\in \Lambda\backslash M$ would be a logical operator for
the code $\calG_M$). One can always extend the code $\calG_M$ to the
rest of the lattice in a trivial way, for example, by adding a gauge
operator $Z_u$ for any qubit $u\in \Lambda\backslash M$. For
simplicity we shall ignore these technicalities.
\begin{proof}
Indeed, suppose $\calG_M$ has at least one logical qubit and let
$d'$ be the distance of $\calG_M$. Then there exists a logical
operator $P\in \calC(\calG_M)\cap \calP(M)$, $P\notin \calG_M$ and a
gauge operator $G\in \calG_M$ such that $|PG|=d'$.
  Then $P\in \calC(\calG)\backslash \calG$, that is, $P$ is a non-trivial logical
operator for the original code. By definition of the group
$\calG_M$, we can extend the operator $G\in \calG_M$ beyond $M$ 
to some gauge operator $G' \in \calG$ such that $G$ is the restriction of $G'$
onto $M$. Since $\calG$ has local generators, such an operator $G'$ 
can be chosen to have support only in $M$ and in $\partial M$.
It means  that $|PG'|\leq
d'+|\partial M|$.
Since $PG'$ is a non-trivial logical operator for the code $\calG$,
its weight must be
at least $d$, or $d \leq d'+|\partial M|$.
\end{proof}

Now we are ready to apply these Lemma's to prove
Theorem~\ref{thm:3}. For simplicity we present a proof for $D=1$ and
open boundary conditions. Generalization to higher dimensions and
periodic boundary conditions is straightforward. In 1D we have to
prove the upper bound \be \label{2r} d\le 3r. \ee Let $M\subseteq
\Lambda$ be the smallest contiguous block of qubits such that the
subsystem code $\calG_M$ obtained by restricting $\calG$ onto $M$
has at least one logical qubit (if there are several minimal blocks
$M$, choose anyone of them). Let $d'$ be the distance of the code
$\calG_M$. The restriction lemma implies that \be \label{d-2r} d'\ge
d-2r. \ee Let us assume that $d'>r$ and show that it leads to a
contradiction. Indeed, $d'>r$ implies $|M|>r$, so we can partition
$M$ into three contiguous blocks, $M=ABC$, such that $|B|=r-1$ and
$A,C\ne \emptyset$. Let $P,Q\in \calC(\calG_M)\backslash \calG_M$ be
any pair of anti-commuting logical operators for the code $\calG_M$.
Since the size of $B$ is smaller than the distance of $\calG_M$, we
can apply Lemma~\ref{lemma:2} to the code $\calG_M$ to clean out
the region $B$ which yields a pair of ``cleaned" logical operators
\[
P,Q\in \calC(\calG_M)\backslash \calG_M, \quad PQ=-QP
\]
 such that
$P,Q$ act trivially on $B$, that is, $P=P_A P_C$, $Q=Q_A Q_C$.
Obviously, $PQ=-QP$ iff $P_A Q_A=-Q_A  P_A$ or $P_C Q_C =-Q_C P_C$.
Assume without loss of generality that $P_A$ anti-commutes with $Q_A$
(otherwise, relabel the blocks $A$ and $C$).
Since the code $\calG_M$ has local generators of size at most $r$ and
the block $B$ has length $r-1$, no generator of $\calG_M$ overlaps with both $A$ and $C$.
Therefore $P,Q\in \calC(\calG_M)$ implies $P_A,Q_A\in \calC(\calG_M)$.
Combining this
observation with the fact that $P_A$ and $Q_A$ anti-commute we conclude that
neither of
 $P_A,Q_A$ can be an element
  of $\calG_M$, that is
 \[
P_A,Q_A\in \calC(\calG_M)\backslash \calG_M, \quad P_AQ_A=-Q_AP_A.
\]
Consider the subsystem code $\calG_A$ obtained by restricting
$\calG_M$ onto $A$. We want to argue that $P_A,Q_A$ are non-trivial
logical operators for the code $\calG_A$. Indeed, by definition,
$P_A,Q_A\in \calC(\calG_M)$ implies $P_A,Q_A\in \calC(\calG_A)$.
Since $P_A$ and $Q_A$  anti-commute, neither of them can be an
element of $\calG_A$, that is,
\[
P_A,Q_A\in \calC(\calG_A)\backslash \calG_A.
\]
Summarizing we have found a contiguous block $A\subset M$,
$|A|<|M|$, such that the code $\calG_A$ has at least one logical
qubit. This however contradicts the minimality of $M$ and thus
proves that $d'\le r$. Taking into account Eq.~(\ref{d-2r}) we
arrive at $d\le 3r$.

The proof of Theorem \ref{thm:3} for 2D subsystem codes is analogous. In that case we would consider
the minimal width strips $M$ of height $L$ such that $\calG_M$ has at least one logical qubit.
The same approach generalizes to arbitrary dimensions $D$.

\subsection{Energy Barrier Upper Bounds}
\label{subs:subsystem_energy_barrier}
Let us start by describing some general properties of the
Hamiltonian $H$ defined in Eq.~(\ref{Hgauge}).
Recall that we define
a gauge group $\calG\subset \calP$ as a group generated by Pauli
operators $G_a$ in the Hamiltonian Eq.~(\ref{Hgauge}). We define a
stabilizer group $\calS$ as the center of $\calG$, i.e.,
$\calS=\calG\cap \calC(\calG)$. The stabilizer group induces a
decomposition of the Hilbert space of $n$ qubits into
$\calG$-invariant sectors, $(\CC^2)^{\otimes n}=\bigoplus_{\bf s}
\calL_{\bf s}$, where ${\bf s}$ labels different syndromes (irreps
of $\calS$). By definition, $H$ is block-diagonal with respect to
this decomposition,
\[
H=\bigoplus_{\bf s} H_{\bf s}.
\]
Each sector $\calL_{\bf s}$ possesses a subsystem structure
$\calL_{\bf s}=\calL_{logical,\bf s}\otimes \calL_{gauge, \bf s}$.
The Hamiltonian $H_{\bf s}$ acts trivially on the $2^k$-dimensional
logical subsystem $\calL_{logical,\bf s}$ and has a non-trivial
spectrum of gauge qubit excitations (which may be different for
different sectors $\bf s$).

The question of whether such Hamiltonian can be a model of a
self-correcting memory is more involved than for stabilizer codes
and goes beyond this paper. One of the first questions is to
determine whether the ground-space of the Hamiltonian is confined to
one sector, and if so, whether the excitations from this sector,
call it ${\bf s}_0$, to the other sectors are {\em gapped}. The
fulfillment of such conditions would show that the Hamiltonian has
some stability with respect to perturbations that affect the logical
qubits. This condition is naturally obeyed for stabilizer code
Hamiltonians. Note that we do not require the gauge-qubit
excitations in a fixed sector to be gapped, since these excitations
do not harm the state of the logical qubits.
In particular, there may be low-lying or gapless gauge-qubits
excitations from the ground-space which give rise to a local order
parameter, see for example the discussion of the quantum compass
model in \cite{Nussinov:2008}. The presence of a local order for the
gauge qubits, i.e. a local operator acting on the gauge qubits which
distinguishes two different ground-sectors, does not at all prevent
the possibility of a `topological order' for the logical qubits. It
only expressed the fact that the gauge qubits are not a good place
to put the quantum information.

Let us define the gap \be
 \label{gapdef}
  \Delta_{\bf s}=\lambda(H_{\bf s})-\lambda(H_{\bf s_0}),
  \ee
where $\lambda(H_{\bf s})$ is the smallest eigenvalue of $H_{\bf s}$
and ${\bf s}_0$ is the sector that contains the smallest eigenvalue of  $H$.

A proper definition of the energy barrier that must be traversed in
order to perform a logical operation is dependent on the
energy spectrum of the Hamiltonian in different ${\bf s}$ sectors. Consider a
logical operator
 $E \in \calC(\calS)\backslash \calG$ and let $\gamma \in
\calW(I,E)$ a walk on the Pauli group implementing $E$. In order to
define the energy cost of a Pauli operator $E$, we consider what the
operator does on the ${\bf s_0}$ sector. Any state in the ${\bf
s_0}$ sector will be mapped onto a state in some other fixed sector
${\bf s}$. We define the energy cost of $E$ as \be
\epsilon(E)=\Delta_{{\bf s}|E({\bf s_0})={\bf s}}.
\label{eq:defcostsub} \ee
Let the maximum energy barrier along a
path be $\epsilon_{max}(\gamma)=\max_{P \in \gamma}\epsilon(P)$.
Then we define the energy barrier for $E$ as
\[
d^\ddag(E)=\min_{\gamma\in \calW(I,E)} \epsilon_{max}(\gamma).
\]
The minimum over these energy barriers for any logical operation is then given by
 \be
\label{d*sub}
d^\ddag = \min_{E\in \calC(\calS)\backslash \calG} \; \;  d^{\ddag}(E).
\ee

In the definition of $d$ and $d^{\ddag}$ we minimize over elements
of $\calC(\calS)\backslash \calG$ which are comprised of a `bare' logical
operator (an element of $\calC(\calG)\backslash \calG$) {\em times} any gauge operator
(an element of $\calG$), see Eq.~(\ref{useful_identity}).
 The minimum weight of operators in ${\calC}(\calS)\backslash \calG$ may
be much less than the minimum weight of `bare' logical operators. A good example is the Heisenberg model on a
$D$-dimensional lattice  where ${\calG}$
is generated by nearest-neighbor $XX$, $YY$ and $ZZ$ operators. If
$L$ is odd, the operators $X_{\rm
all}=\prod_{u\in \Lambda} X_u$ and $Z_{\rm all}=\prod_{u\in \Lambda}
Z_u$ commute with ${\calG}$ but are not generated by ${\calG}$. In
addition we have ${\cal S}=I$. This implies that the minimal weight
of the bare logical operators in ${\calC}(\calG) \backslash \calG$
is $L^D$. However, the distance $d$ of this code
is $1$, because we can reduce the weight of, say, $X_{\rm all}$ to $1$
by multiplying with it with $XX$-type gauge operators. If some gauge qubits
operations are costly to implement (i.e. require creating
high-energy excitations) then it may be more physically reasonable
to not count these operations for free in minimizing the distance
and the energy barrier. Whether such choices are warranted however
depends on the features and spectrum of the Hamiltonian and cannot
be analyzed using the coding framework alone.


In order to set the stage for proving no-go results which do not depend on detailed
properties of the Hamiltonian, we will derive a simple upper bound
on the energy cost of a Pauli operator $E$ defined in
Eq.~(\ref{eq:defcostsub}). We apply this analysis to the Hamiltonian
defined in Eq.~(\ref{Hgauge}) assuming for simplicity that $r_a=-1$,
that is,
\be
 \label{Hgauge1}
 H=-\sum_{a=1}^m G_a.
\ee
Let $E\in \calP$ be any Pauli error mapping the ground state sector $\calL_{\bf s_0}$
to some sector $\calL_{\bf s}$ and let $|\psi_0\ra$ be a ground state
of $H_{\bf s_0}$, such that $\lambda(H_{\bf s_0})=\la \psi_0|H|\psi_0\ra$.
Since $E\, |\psi_0\ra\in \calL_{\bf s}$ we get an upper bound
 $\lambda(H_{\bf s})\le \la \psi_0|E^\dag H E |\psi_0\ra$. It implies
 \[
 \epsilon(E)=\lambda(H_{\bf s})-\lambda(H_{\bf s_0}) \le \la \psi_0|E^\dag H E - H|\psi_0\ra
 \]
and thus
\[
\epsilon(E)\le \| E^\dag H E-H\| \le \|\,  [H,E] \, \|.
\]
We can bound the norm of the commutator $[H,E]$ taking into account
that $E$ either commutes or anti-commutes with any generator $G_a$.
It yields \be \label{ubound} \epsilon(E)\le 2 \, \#\{ a\, : \, EG_a
=-G_a E\}. \ee Summarizing we can bound the energy cost of $E$ by
(twice) the number of terms in the Hamiltonian anti-commuting with
$E$. Note that this upper bound is identical to the definition of
energy cost for stabilizer codes, Eq.~(\ref{eq:ecost}).

The upper bound Eq.~(\ref{ubound}) opens up the possibility for a generalization of the
energy barrier arguments in Section \ref{sec:stabilizer}
to subsystem codes.
For example, one can use Eq.~(\ref{ubound}) to show that
$d^{\ddag}=O(1)$ for the 2D Bacon-Shor code. This easily follows
from the fact that a partially implemented logical operator (say,
$\overline{X}=\prod_{i} X_{1,i}$) only anti-commutes with the gauge
operators $ZZ$ at the end-points. (Numerical results
of~\cite{qcompass} indicate that $d^\ddag$ decays exponentially with
the lattice size $L$.)

Using Eq.~(\ref{ubound}) we can also show that $d^{\ddag}=O(d)$.
Indeed, choose a non-trivial logical operator $P=P_{u_1}\ldots
P_{u_d} \in \calC(\calS) \backslash \calG$ with weight $d$. The
number of generators $G_a$ anti-commuting with any single-qubit
operator $P_{u_i}$ is at most $O(1)$. Therefore the number of
generators anti-commuting with $P$ is at most $O(d)$.
Implementing  the sequence of the single-qubit Pauli operators
$P_{u_1},\ldots, P_{u_d}$ in an arbitrary order one gets a walk $\gamma\in \calW(I,P)$
with $\epsilon_{max}(\gamma)=O(1)$.

\subsection{Proofs of Theorem~1${}^*$ and 2${}^*$}
\label{subs:subsystem_aux}

Let $\calS=\la S_1,\ldots,S_m\ra$ be a stabilizer code with local
generators, and let
$\overline{X}_1,\overline{Z}_1,\ldots,\overline{X}_g,\overline{Z}_g$
be logical Pauli operators on some subset of $g$ logical qubits (the
ones we want to treat as gauge qubits). Let $\calG=\la
\calS,\overline{X}_1,\overline{Z}_1,\ldots,\overline{X}_g,\overline{Z}_g\ra$.
 We can regard $\calG$ as the gauge group of a subsystem
code. Clearly, $\calG\cap \calC(\calG)=\calS$, so $\calS$ is the
stabilizer group of $\calG$. In order to prove Theorems~1${}^*$ and
2${}^*$ let us define a linear distance of a subsystem code.
\begin{dfn}
\label{dfn:ldistance1}
Given a subsystem code $\calG$ define a linear distance of $\calG$ as
\be
d_1(\calG)=\min_{P\in \calC(\calS)\backslash \calG} d_1(P).
\ee
\end{dfn}
Recall that $d_1(P)$ is a linear dimension of $P$, that is, the minimal length of a contiguous
interval $R\subseteq \{1,\ldots,L\}$ such that $R\times \{1,\ldots,L\}^{D-1}$ covers the support of $P$,
see Section~\ref{sec:stabilizer}.
Thus a linear distance of a code characterizes the minimal linear dimension of non-trivial logical operators
along some fixed coordinate axis.

\begin{prop}
\label{prop:ldistance1}
Let $\calG$ be a subsystem code on a lattice $\Lambda=\{1,\ldots,L\}^D$
with open or periodic boundary conditions.
Assume that a stabilizer group $\calS=\calG\cap \calC(\calG)$ has local generators,
$\calS=\la S_1,\ldots,S_m\ra$, such  that
the support of any generator $S_a$
can be covered by a hypercube with $r^D$ vertices. Then
\be
\label{ldistance_upper1}
d_1(\calG)\le r
\ee
for any $L\ge 2(r-1)^2$.
\end{prop}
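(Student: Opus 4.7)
The plan is to adapt the proof of Proposition \ref{prop:ldistance} to the subsystem setting, replacing Lemma \ref{lemma:1} with Lemma \ref{lemma:2}. The crucial enabling observation is the remark following Lemma \ref{lemma:2}: when the stabilizer group $\calS$ has local generators, which is precisely the hypothesis here, the cleaning stabilizer $S$ can be chosen to involve only those generators $S_a$ whose support overlaps the cleaned region $M$. Hence cleaning can be performed simultaneously on disjoint, well-separated regions without interfering with one another, just as in the subspace case.

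I would present the argument for $D=2$ since higher dimensions are analogous. The first step is to decompose the lattice into $K$ vertical strips $A_1,\ldots,A_K$ of width $r$ or $r-1$ with $K$ even; the assumption $L \ge 2(r-1)^2$ ensures such a decomposition exists. Suppose for contradiction that $d_1(\calG) > r$. Fix an arbitrary $P \in \calC(\calS) \backslash \calG$. For each even strip $A_{2p}$, any Pauli operator supported in $A_{2p}$ has linear dimension at most $r < d_1(\calG)$, so case~(1) of Lemma \ref{lemma:2} is ruled out. Case~(2) then supplies a stabilizer, built from generators $S_a$ with $\mathrm{Supp}(S_a) \cap A_{2p} \neq \emptyset$, that cleans $A_{2p}$.

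Because each such generator has width at most $r$, it overlaps at most one even strip and at most one odd strip; the cleanings on different even strips therefore commute and can be executed in parallel, yielding an equivalent operator $P' = P_1' P_3' \cdots P_{K-1}'$ with $P_j' \in \calP(A_j)$. The same bandwidth argument shows that each local generator $S_a$ overlaps at most one odd strip, so $P' \in \calC(\calS)$ forces $P_j' \in \calC(\calS)$ individually for every odd $j$. Moreover $P' = P\cdot S$ with $S \in \calS \subseteq \calG$, hence $P' \notin \calG$, which forces $P_j' \notin \calG$ for at least one odd $j$. Such a $P_j'$ lies in $\calC(\calS)\backslash \calG$ and has linear dimension at most $r$, contradicting $d_1(\calG) > r$.

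The main conceptual obstacle, relative to Proposition \ref{prop:ldistance}, is that Lemma \ref{lemma:2} on its own only guarantees the existence of some cleaning stabilizer $S \in \calS$, which in general could have global support and so could spoil cleaning already performed on a distant strip. The hypothesis that $\calS$ is locally generated is exactly what is needed to localize the cleaning, and once that is in hand the strip-by-strip reduction proceeds formally as in the subspace case. It is worth noting that the proposition does \emph{not} assume $\calG$ is locally generated, only that its center $\calS$ is; this is what makes the statement nontrivial and useful for the subsequent proofs of Theorems~1${}^*$ and 2${}^*$.
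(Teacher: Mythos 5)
Your proposal follows the paper's proof essentially step for step --- the same strip decomposition, the same substitution of Lemma~\ref{lemma:2} for Lemma~\ref{lemma:1}, and the same key observation that it is local generation of $\calS$ (not of $\calG$) that allows the even strips to be cleaned in parallel. There is, however, one concrete misstep: you fix an arbitrary $P\in\calC(\calS)\backslash\calG$ and then invoke case~(2) of Lemma~\ref{lemma:2} to clean it, but case~(2) is stated, and proved, only for $P\in\calC(\calG)$. The proof of that lemma needs $P$ to commute with all of $\calG$ so that the restriction $P_M$ lands in $\calC(\calG(M))\cap\calP(M)=\calS_M$; for a general element of $\calC(\calS)=\calG\cdot\calC(\calG)$ the restriction only lands in $\calG_M$, so it can at best be removed by multiplying with a gauge operator, and since $\calG$ is not assumed to have local generators here, such a cleaning could reintroduce support on strips you have already cleaned.

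The repair is one line and is exactly what the paper does: start from a bare logical operator $P\in\calC(\calG)\backslash\calG$. This set is nonempty whenever $\calC(\calS)\backslash\calG$ is, by the identity $\calC(\calS)=\calG\cdot\calC(\calG)$ of Eq.~(\ref{useful_identity}): writing any $P\in\calC(\calS)\backslash\calG$ as $P=GQ$ with $G\in\calG$ and $Q\in\calC(\calG)$ forces $Q\notin\calG$. With this starting point the rest of your argument goes through unchanged; in particular your endgame is correct, since the contradiction only requires the restriction of $P'$ to some odd strip to lie in $\calC(\calS)\backslash\calG$ (the set over which $d_1(\calG)$ is minimized), and you correctly derive membership in $\calC(\calS)$ from the locality of the generators of $\calS$ and non-membership in $\calG$ from $P'=PS$ with $S\in\calS\subseteq\calG$.
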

\begin{proof}
The proof is almost identical to  the proof of
Proposition~\ref{prop:ldistance} in Section~\ref{sec:stabilizer}.
It goes by assuming that $d_1(\calG)>r$ and showing that it leads to
a contradiction.
The only difference is that instead of cleaning out a logical
operator $P\in \calC(\calS)\backslash \calS$ using
Lemma~\ref{lemma:1} one has to clean out a logical operator $P\in
\calC(\calG)\backslash \calG$ using Lemma~\ref{lemma:2}. After
cleaning out all even strips $A_2,\ldots,A_K$ one gets an equivalent
logical operator $P'\in \calC(\calG)\backslash \calG$ that has
support only on odd strips $A_1,A_3,\ldots,A_{K-1}$ (we used the
notations from the proof of Proposition~\ref{prop:ldistance}). Since
any generator of $\calS$ overlaps with at most one odd strip, the
inclusion $P'\in \calC(\calG)\subseteq \calC(\calS)$ implies that a
restriction of $P'$ onto any odd strip is an element of
$\calC(\calS)$. At least one of these restrictions is not an element
of $\calG$. Therefore, a restriction of $P'$ onto some odd strip is
an element of $\calC(\calS)\backslash \calG$. 
This is in
contradiction with the assumption that $d_1(\calG)>r$.
\end{proof}

Theorems~1${}^*$ and 2${}^*$ are direct corollaries of
Proposition~\ref{prop:ldistance1}. Indeed, choose any logical
operator $P\in \calC(\calS)\backslash \calG$ with a linear dimension
$d_1(P)\le r$. Then the weight of $P$ is at most $rL^{D-1}$, that
is, $d(\calG)\le rL^{D-1}$ which proves Theorem~1${}^*$.
Furthermore, if $D=2$ then the bound $d_1(P)\le r$ implies that the
support of $P$ can be covered by a quasi-1D vertical string and thus
one can construct a walk $\gamma\in \calW(I,P)$ with
$\epsilon_{max}(\gamma)=O(1)$. It proves Theorem~2${}^*$.

\section{Discussion and open problems}
\label{sec:open} This paper addressed the problem of constructing a
self-correcting quantum memory based on stabilizer codes with
geometrically-local generators. We developed several technical tools
for analyzing such codes and proved upper bounds on the distance
that are tight for 1D and 2D codes. In addition, we defined an
energy barrier separating different logical states and proved that
for 2D stabilizer codes the energy barrier can not grow with the
lattice size.
We note that a similar conclusion has been independently reached by Kay and Colbeck~\cite{KC:selfqc}
using completely different techniques.

 It would be interesting to prove our conjecture that $d=O(L)$ for
stabilizer codes in 3D. In addition, we would like to bound the
energy barrier $d^{\ddag}$ for 3D stabilizer codes and for 2D
subsystem codes. It might be possible that a self-correcting quantum
memory can only be based on 3D subsystem codes such as the one
suggested by Bacon~\cite{Bacon:2005}. This would require a detailed
analysis of the energy barrier for these systems.

Another interesting problem concerns algorithms for computing the
distance of stabilizer codes with local generators. The dynamic
programming algorithm allows one to compute the distance of
stabilizer (subsystem) codes with local generators in time of order
$L\exp{(L^{D-1})}$. It requires exponential time for $D\ge 2$. It
remains an open question whether the distance can be computed in
polynomial time for $D\ge 2$.

\section{Acknowledgements}
We thank Dave Bacon for posing some of the questions that lead to
the results in this paper. We thank Panos Aliferis for many
discussions on the topic of this paper. We are grateful to 
Roger Colbeck for explaining to us several technical points of~\cite{KC:selfqc}.

%


\end{document}